\DeclareMathOperator{\C}{\mathcal{C}}
\DeclareMathOperator{\Aut}{Aut}
\DeclareMathOperator{\End}{End}
\DeclareMathOperator{\Gal}{Gal}
\DeclareMathOperator{\supp}{supp}
\DeclareMathOperator{\rk}{rk}
\theoremstyle{definition}
\newtheorem{theorem}{Theorem}[section]
\newtheorem{lemma}[theorem]{Lemma}
\newtheorem{corollary}[theorem]{Corollary}
\newtheorem{proposition}[theorem]{Proposition}
\newtheorem{remark}[theorem]{Remark}
\newcommand{\fqn}{\mathbb{F}_{q^n}}
\newcommand{\cC}{{\mathcal C}}
\newcommand{\cS}{{\mathcal S}}
\newcommand{\cT}{{\mathcal T}}
\newcommand{\cG}{{\mathcal G}}
\newcommand{\cH}{{\mathcal H}}
\newcommand{\cD}{{\mathcal D}}
\newcommand{\cV}{{\mathcal V}}
\newcommand{\cL}{{\mathcal L}}
\newcommand{\F}{{\mathbb F}}
\newcommand{\K}{{\mathbb K}}
\newcommand{\LL}{{\mathbb L}}
\newcommand{\fq}{{\mathbb F}_{q}}
\newcommand{\la}{\langle}
\newcommand{\ra}{\rangle}
\newcommand{\PG}{\mathrm{PG}}
\newcommand{\N}{\mathrm{N}}
\newcommand{\ZZ}[1]{\mathbb Z/#1\mathbb Z}
\newcommand{\alessandro}[1]{{\color{blue} \sf $\star\star$ Alessandro: [#1]}}
\title{Extending two families of maximum rank distance codes}
\date{}
\author[A. Neri]{Alessandro Neri}
\address{Alessandro Neri, \textnormal{Max-Planck-Institute for Mathematics in the Sciences, Inselstraße 22, 04103 Leipzig, Germany}}
\email{alessandro.neri@mis.mpg.de}
\author[P. Santonastaso]{Paolo Santonastaso}
\address{Paolo Santonastaso, \textnormal{Dipartimento di Matematica e Fisica, Universit\`a degli Studi della Campania ``Luigi Vanvitelli'', Viale Lincoln, 5, I--\,81100 Caserta, Italy}}
\email{paolo.santonastaso@unicampania.it}
\author[F. Zullo]{Ferdinando Zullo}
\address{Ferdinando Zullo, \textnormal{Dipartimento di Matematica e Fisica, Universit\`a degli Studi della Campania ``Luigi Vanvitelli'', Viale Lincoln, 5, I--\,81100 Caserta, Italy}}
\email{ferdinando.zullo@unicampania.it}
\subjclass[2020]{11T71; 11T06; 94B05} 
\keywords{Rank-metric codes; linearized polynomials;  MRD codes; scattered polynomials}
\begin{document}

\maketitle

\begin{abstract}
In this paper we  provide a large family of rank-metric codes, which contains properly the codes recently found by Longobardi and Zanella (2021) and by Longobardi, Marino, Trombetti and Zhou (2021). These codes are $\F_{q^{2t}}$-linear of dimension $2$ in the space of linearized polynomials over $\F_{q^{2t}}$, where $t$ is any integer greater than $2$, and we prove that they are maximum rank distance codes.
For $t\ge 5$, we determine their equivalence classes and these codes turn out to be inequivalent to any other construction known so far, and hence they are really new.
\end{abstract}

\section{Introduction}

Codes endowed with the rank-metric have gained a lot of interest in the last decade due to their numerous applications. In particular, the turning point was the groundbreaking work of Silva, K\"otter and Kschischang  \cite{silva2008rank}, in which they proposed rank metric codes as tools for dealing with linear random network coding. However, the origin of rank-metric codes is due to Delsarte's seminal paper  \cite{de78} in  1978, where they were first defined for a pure combinatorial interest. Few years later, Gabidulin rediscovered them independently \cite{ga85a}. The first applications were due to Roth in \cite{roth1991maximum} for crisscross deletion correction, and to Gabidulin, Paramonov and Tretjakov in \cite{gabidulin1991ideals} for a  cryptosystem based on rank-metric codes. From a mathematical point of view, rank-metric codes have been shown to possess connections with many subjects, such as semifield theory \cite{sheekey2016new}, linear sets in finite geometry \cite{polverino2020connections}, tensorial algebra \cite{byrne2019tensor},  skew algebras \cite{ACLN20+,elmaazouz2021}, matroid theory \cite{gorla2020rank} and many more.  All these connections testify the rich structure that rank-metric code possess. 

Formally, rank-metric codes are sets of $n \times m$ matrices over a finite field $\F_q$, endowed with the rank metric. This is the metric defined by the rank distance, where the rank distance between two matrices is the rank of their difference. There is also another representation of rank-metric codes, which allows to endow them with stronger algebraic properties. When $n=m$, one may indeed identify the space $\fq^{n\times n}$, with the ring of $\sigma$-polynomials with coefficients in $\fqn$, where $\sigma$ is a generator of $\Gal(\fqn/\fq)$. This allows also to introduce a notion of $\fqn$-linearity of a rank-metric code.

Among rank-metric codes, of particular interest is the family of \emph{maximum rank distance (MRD)} codes. These are codes that have optimal parameters: for the given size  and minimum rank distance, they have the maximum cardinality. The first construction of a family of MRD codes was due already to Delsarte \cite{de78} and independently to Gabidulin \cite{ga85a}. The codes of this family are now known as \emph{Gabidulin codes} and for many years were essentially the only known constructions, until Sheekey came up with a broader family of MRD codes, named \emph{twisted Gabidulin codes}.
The flexibility of the parameters of these families is one of the main reasons that made Gabidulin and twisted Gabidulin codes very appealing. Another large family of MRD codes was later given by Trombetti and Zhou in \cite{trombetti2018new}. Except from that, almost all the other known MRD construction have all very specific restriction on some of the parameters: for instance, the MRD codes   \cite{BZZ,csajbok2018anewfamily,csajbok2018maximum,csajbok2018linearset,MMZ,ZZ}  exist only for $n \in \{6,7,8\}$.

Very recently, two classes of $2$-dimensional $\fqn$-linear MRD codes have been introduced for any even $n$. One was given in \cite{longobardizanellascatt} and the second in \cite{longobardi2021large}. The arguments used there for showing that these codes are MRD exploited the correspondence of MRD codes with scattered linear sets \cite{polverino2020connections} and in particular with scattered polynomials.  

\medskip

In this paper we provide a wider family of $2$-dimensional $\fqn$-linear MRD codes properly containing the two families introduced in \cite{longobardizanellascatt,longobardi2021large}; see Theorem \ref{th:newfamily}. In order to show that they are MRD, we first give in Theorem \ref{th:rankfs} a more general argument which allows to extend any construction of MRD codes based on $\sigma$-polynomials to any other generator $\theta$ of $\Gal(\fqn/\fq)$ under certain hypotheses. We then focus on the study of the equivalence of these codes. We first prove that the codes in this wider family  are all inequivalent to all the other $\fqn$-linear MRD codes known so far; see Proposition \ref{prop:not_Gabidulin} and Theorem \ref{th:no_twisted}. Afterwards, we concentrate on the equivalence problem  within this new family: in Theorem \ref{th:equivspecialcase} and in Corollary \ref{cor:inequivalence_newMRD}, we characterize for which parameters two of these codes are equivalent. This also allows to derive results on the number of equivalence classes of codes in the new family: in Theorem \ref{th:numexact} we provide the exact but implicit number of these equivalence classes, while in Theorem \ref{th:lowbound} we give an explicit lower bound.

The paper is structured as follows. Section \ref{sec:preliminaries} collects all the basic ingredients we need throughout the paper.
In Section \ref{sec:newfamily} we introduce the new family of codes, showing that they are MRD. Section \ref{sec:equivalence} is dedicated to the study of the equivalence classes of the new codes. Finally, we recap our findings and list some open problems in Section \ref{sec:conclusions}.

\section{Preliminaries} \label{sec:preliminaries}

In this section we give a recap on the important notions and results needed for the paper.  We start introducing rank-metric codes and  their representation as linearized polynomials. We then recall the notion of maximum rank distance (MRD) codes and explain the two most prominent infinite families of MRD codes known up to now. Finally, we give a short description of the invariants studied in \cite{neri2020equivalence}, which will be used for determining code inequivalence. For the interested reader, we refer to the  survey on rank-metric codes written by Sheekey \cite{sheekeysurvey}, which provides an exhaustive study on rank-metric codes in various frameworks.

We start fixing the following notation. Let $p$ be a prime and $r$ a positive integer. We fix $q=p^r$ and denote by $\fq$ the finite field with $q$ elements. Moreover, we fix a positive integer $t$, let $n=2t$, and consider the extension field $\fqn$ of degree $n$ over $\fq$. It is well-known that this extension is Galois and that the Galois group $\Gal(\fqn/\fq)$ is cyclic. For the rest of the paper we will use $\sigma$ and $\theta$ to denote  generators of $\Gal(\fqn/\fq)$. Recall that for a Galois extension of the form $\fqn/\fq$, the \emph{norm} of an element $\alpha \in \fqn$ is defined as
$$ \mathrm{N}_{q^n/q}(\alpha):= \prod_{\rho\in\Gal(\fqn/\fq)}\rho(\alpha).$$

\subsection{Rank-metric codes and linearized polynomials}\label{sec:rankmetric_linearized}

Rank-metric codes were introduced by Delsarte \cite{de78} in 1978 and they have been intensively investigated in recent years because of their applications in crisscross error correction \cite{roth1991maximum}, cryptography \cite{gabidulin1991ideals} and network coding \cite{silva2008rank}. 
Formally, on the set of  matrices $\F_q^{n\times m}$ we can define the \emph{rank-metric}, as
\[d(A,B) = \mathrm{rk}\,(A-B), \qquad \mbox{ for } A,B\in\fq^{n\times m}.\]
 A \emph{rank-metric code}  is a subset $\C$  of $\F_q^{n \times m}$ endowed with the rank metric.
The \emph{minimum rank distance} of $\C$ is defined as
\[d := d(\C) =\min\{ d(A,B) \colon A,B \in \C,\,\, A\neq B \}.\]
Moreover, if $\C$ is an  $\fq$-linear subspace of $\fq^{n\times m}$, we will also say that the code is $\fq$-linear, and in this case  the minimum rank distance is also equal to
\[d(\C)= \min\{ \rk(A) \colon A \in \C,\,\, A\neq 0 \}.\]
Delsarte showed in \cite{de78} that the parameters of a rank-metric code  must satisfy a Singleton-like bound, that reads as
\[ |\C| \leq q^{\max\{m,n\}(\min\{m,n\}-d+1)}. \]
When equality holds, we call $\C$ a \emph{maximum rank distance} (\emph{MRD} for short) code.
It was shown that  MRD codes exist for any choice of $q,n,m,d$; see \cite{de78,ga85a}.

In this paper we will focus on the case of square matrices, that is when $n=m$.\footnote{We remark that for the study of MRD codes, considering the case of square matrices is not a real restriction. Indeed if we have $m<n$, then any MRD code in $\fq^{n \times n}$ can be used to obtain MRD codes in $\fq^{n \times m}$ by simply removing from each matrix the last $n-m$ columns. 
} In this case there is an alternative way to see the $\fq$-algebra of $n\times n$ matrices as the algebra of  \emph{$\sigma$-polynomials}. Formally, let $\sigma$ be a generator of $\Gal(\fqn/\fq)$. A $\sigma$-polynomial is an element of the form
$$ f(x):=\sum_{i=0}^{n-1}f_i x^{\sigma^i}, \quad f_i \in \fqn.$$

The set of $\sigma$-polynomials forms a ring with the usual addition and the \emph{composition}, given by
$$ (f_ix^{\sigma^i})\circ (g_jx^{\sigma^j})=f_i\sigma^i(g_j)x^{\sigma^{i+j}}, $$
on $\sigma$-monomials, and then extended by distributivity. We denote this ring by $\cL_{n,\sigma}$. It is well-known that 
\begin{equation}\label{eq:isom_sigma} (\cL_{n,\sigma},+,\circ)\cong(\End_{\fq}(\fqn),+,\circ),
\end{equation}where the $\sigma$-polynomial $f(x)$ is identified with the endomorphism of $\fqn$
$$ \alpha \longmapsto \sum_{i=0}^{n-1}f_i \sigma^i(\alpha).$$
However, a different choice of the generator $\sigma$ only gives a different  representation of an element, but the ring $\cL_{n,\sigma}$ does not depend on this choice. Hence, $\cL_{n,\sigma}$ is the same for any choice of $\sigma$, and one can just consider  $\cL_{n,q}:=\cL_{n,\theta}$, where $\theta$ is the $q$-Frobenius isomorphism of $\fqn$, mapping each $\alpha \in \fqn$ to $\alpha^q$. Thus, one can speak of $\sigma$-polynomials just in $\cL_{n,q}$. Notice that $\cL_{n,q}$ is  isomorphic to the algebra of \emph{$q$-linearized polynomials} modulo the two-sided ideal generated by $x^{q^n}-x$; see \cite{wu2013linearized}. 

We remark that the isomorphism given in \eqref{eq:isom_sigma} holds in a more general setting and not only over finite fields. For instance, one can define $\sigma$-polynomials over any field $\LL$ with $\sigma \in \Aut(\LL)$, and obtain $(\cL_{n,\sigma},+,\circ)\cong(\End_{\K}(\LL),+,\circ)$, where $\K=\LL^\sigma$ and $n=[\LL:\K]$; see e.g. \cite{Gow}. More generally, there is a similar isomorphism also for \emph{any} Galois extension of fields  or of rings \cite[Theorem 1.3]{chase1969galois}, which have been exploited for the development of a more general theory of rank-metric codes \cite{augot2013rank,ACLN20+,kamche2019rank,elmaazouz2021}.

Thanks to the isomorphism in \eqref{eq:isom_sigma}, we immediately get that $(\cL_{n,\sigma},+,\circ)$ is also isomorphic to the $\fq$-algebra $\fq^{n\times n}$, since $\fqn$ is an $n$-dimensional $\fq$-vector space. Thus, rank-metric codes can equivalently be represented as subsets of $\cL_{n,q}$. Here, we will speak of \emph{kernel} and \emph{rank} of a $\sigma$-polynomial meaning by this the kernel and rank of the corresponding endomorphism. This naturally defines the rank-metric directly on $\cL_{n,q}$. 

There is a very special $\sigma$-polynomial that is central for many aspects of Galois theory and duality theories. This is the case of the \emph{trace map}, defined as
$$ \mathrm{Tr}_{q^n/q}(x):=\sum_{i=0}^{n-1}x^{\sigma^i}.$$
The trace maps induces a nondegenerate symmetric $\fq$-bilinear form $\langle \cdot,\cdot\rangle$ on $\fqn$, given by
$$ \langle \alpha,\beta\rangle=\mathrm{Tr}_{q^n/q}(\alpha\beta).$$
The \emph{adjoint} of a $\sigma$-polynomial $f(x)=f_0x+f_1x^{\sigma}+\ldots+f_{n-1}x^{\sigma^{n-1}}$ with respect to the trace bilinear form is 
$$f^\top(x)=\sum_{i=0}^{n-1}{\sigma^{n-i}}(f_i)x^{\sigma^{n-i}},$$
that is the $\sigma$-polynomial satisfying
$$\mathrm{Tr}_{q^n/q}(f(\alpha)\beta)=\mathrm{Tr}_{q^n/q}(\alpha{f}^\top(\beta)), \qquad \mbox{ for every } \alpha,\beta \in \fqn.$$
In this framework, the \emph{adjoint code} $\C^\top$ of a rank-metric code $\C \subseteq \cL_{n,q}$ is
\[ \C^\top= \{{f}^\top(x) \in \cL_{n,q} \colon f(x) \in \C\}. \]

\medskip

\noindent Two rank-metric codes $\C_1, \C_2\subseteq\cL_{n,q}$ are said to be \emph{equivalent} if there exist two invertible $\sigma$-polynomials $f_1(x),f_2(x) \in \cL_{n,q}$ and a field automorphism $\rho \in \mathrm{Aut}(\fqn)$ such that
\[ \C_1=f_1 \circ \C_2^\rho \circ f_2= \{ f_1 \circ g^\rho \circ f_2 \colon g \in \C_2\}, \]
where $g^\rho(x):=\sum_{i=0}^{n-1}\rho(a_i)x^{\sigma^i}$ if $g(x)=\sum_{i=0}^{n-1}a_ix^{\sigma^i}$.

\noindent In addition, a useful tool for studying equivalence of codes is represented by the idealizers. They have been introduced in \cite{liebhold2016automorphism} and used to study equivalence and automorphisms of Gabidulin codes.
Formally, the \emph{left} and \emph{right idealizers} of a rank-metric code
$\C\subseteq\cL_{n,q}$ are defined as
\[L(\C)\coloneqq\{\varphi(x) \in \cL_{n,q} \colon \varphi \circ f \in \C\, \text{for all} \, f \in \C\},\]
\[R(\C)\coloneqq\{\varphi(x) \in \cL_{n,q} \colon f \circ \varphi \in \C\, \text{for all} \, f \in \C\}.\]
Such structures have been also investigated in \cite{lunardon2018nuclei} under the name of middle and right nuclei.
\medskip

\noindent We conclude this section recalling a useful characterization result established in \cite{mcguire2019characterization} (see also \cite{csajbok2019characterization}) for determining the rank of a $\sigma$-polynomial. There, it was shown that it is sufficient to caluclate the rank of a possibly smaller matrix build up with the coefficient of the $\sigma$-polynomial.

\begin{theorem}\cite[Theorem 6]{mcguire2019characterization}
Let $f(x)=\sum_{i=0}^ka_ix^{\sigma^i}$ be an element of $\cL_{n,\sigma}$ with $\sigma$-degree $k$.
Then 
\[ \mathrm{rk}(f)=n-k+\mathrm{rk}(C_f C_f^{\sigma} \cdot \ldots \cdot C_f^{\sigma^{n-1}}-I_k), \]
where 
\[C_f=\left( \begin{matrix} 
0 & 0 & \cdots & 0 & -a_0/a_k \\
1 & 0 & \cdots & 0 & -a_1/a_k \\
0 & 1 & \cdots & 0 & -a_2/a_k \\
\vdots & \vdots & \ddots & \vdots & \vdots \\
0 & 0 & \cdots & 1 & -a_{k-1}/a_k 
\end{matrix}
\right),\]
$C_f^{\sigma^i}$ is the matrix obtained from $C_f$ by applying $\sigma^i$ to each of its entries and $I_k$ is the identity matrix of order $k$.
\end{theorem}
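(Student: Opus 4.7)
The plan is to convert the $\fq$-rank of $f$ into an $\fqn$-rank by encoding a root $\alpha$ as the vector of its first $k$ $\sigma$-conjugates. Define the $\fq$-linear injection $V\colon \fqn \to \fqn^k$ by $V(\alpha) = (\alpha,\alpha^\sigma,\ldots,\alpha^{\sigma^{k-1}})^\top$. Using $a_k\alpha^{\sigma^k}=-\sum_{i<k}a_i\alpha^{\sigma^i}$ together with the trivial identities for the lower entries, one checks that $f(\alpha)=0$ is equivalent to the semilinear equation $V(\alpha)^\sigma = C_f^\top V(\alpha)$. Conversely, any $v\in\fqn^k$ satisfying $v^\sigma = C_f^\top v$ must have $v_{j+1}=v_j^\sigma$ from the first $k-1$ rows of $C_f^\top$, so $v=V(v_1)$, and the last row becomes exactly $f(v_1)=0$. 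Hence $V$ induces an $\fq$-linear bijection $\ker f \cong S$, where $S=\{v\in\fqn^k : v^\sigma = C_f^\top v\}$.

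Next, iterating the equation $v^\sigma = C_f^\top v$ and using $\sigma^n=\mathrm{id}$ yields $v = (C_f C_f^\sigma\cdots C_f^{\sigma^{n-1}})^\top v = M^\top v$. Therefore $S\subseteq W:=\ker(M^\top-I_k)$, and since $\rk(M^\top-I_k)=\rk(M-I_k)$, the theorem reduces to showing the dimensional identity $\dim_{\fq}S = \dim_{\fqn}W = k-\rk(M-I_k)$.

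I would establish this equality by Galois descent. Assume first $a_0\neq 0$, so that $C_f$ is invertible, and define the $\sigma$-semilinear automorphism $T\colon\fqn^k\to\fqn^k$, $T(v)=(C_f^\top)^{-1}v^\sigma$; its fixed set is exactly $S$. From the identity $M^\sigma = C_f^{-1}MC_f$, which is immediate from $\sigma^n=\mathrm{id}$, one deduces that $T$ commutes with $M^\top$ and hence preserves $W$. On $W$ the operator $M^\top$ is the identity, and a direct computation shows $T^n|_W = (M^\top)^{-1}|_W = \mathrm{id}$. Hilbert's theorem 90 (i.e., Galois descent for a cyclic semilinear action of the right order) then yields $\dim_{\fq} W^T = \dim_{\fqn}W$, and since $W^T = S$ this is the desired equality.

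The main obstacle is the degenerate case $a_0 = 0$, when $C_f$ fails to be invertible and $T$ is ill-defined. To handle it, let $j$ be minimal with $a_j\neq 0$ and factor $f = g\circ x^{\sigma^j}$, so that $g$ has $\sigma$-degree $k-j$ and nonzero constant term, while $\rk g = \rk f$ because $x\mapsto x^{\sigma^j}$ is bijective. Applying the previous case to $g$ gives $\rk g = n-(k-j)+\rk(M_g - I_{k-j})$, and a direct block computation comparing $C_f$ with $C_g$ yields $\rk(M_f - I_k) = j + \rk(M_g - I_{k-j})$, which closes the proof.
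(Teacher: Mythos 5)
The paper does not prove this statement: it is imported verbatim as \cite[Theorem 6]{mcguire2019characterization}, so there is no internal proof to compare against. Judged on its own, your argument is correct and complete. The equivalence $f(\alpha)=0 \Leftrightarrow V(\alpha)^\sigma=C_f^\top V(\alpha)$ checks out (the first $k-1$ rows of $C_f^\top$ are the shift, the last row encodes $a_k\alpha^{\sigma^k}=-\sum_{i<k}a_i\alpha^{\sigma^i}$), and iterating gives $S\subseteq W=\ker(M^\top-I_k)$. The delicate point is the reverse inequality $\dim_{\fq}S=\dim_{\fqn}W$, and you handle it properly: you verify not only that $T(v)=(C_f^\top)^{-1}v^\sigma$ commutes with $M^\top$ (via $M^\sigma=C_f^{-1}MC_f$) but also that $T^n|_W=(M^\top)^{-1}|_W=\mathrm{id}$, which is exactly the hypothesis needed to invoke Galois descent for the cyclic semilinear action --- without it the fixed space could be strictly smaller, as the rank-one example $T(v)=cv^\sigma$ with $\N_{q^n/q}(c)\neq 1$ shows. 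The degenerate case $a_0=0$ also works: with $a_0=\cdots=a_{j-1}=0$ the matrix $C_f$ is block lower triangular with nilpotent shift $N_j$ in the top-left corner and $C_g$ in the bottom-right, so $M_f-I_k$ has the form $\left(\begin{smallmatrix}-I_j & 0\\ \ast & M_g-I_{k-j}\end{smallmatrix}\right)$ and the rank identity $\rk(M_f-I_k)=j+\rk(M_g-I_{k-j})$ follows. This is essentially the route of the cited source (reduce to counting solutions of the semilinear system $v^\sigma=C_f^\top v$), so your proof can be regarded as a self-contained substitute for the external reference.
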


\subsection{Gabidulin and twisted Gabidulin codes}\label{sec:gab_and_tGab}

In this subsection, we survey on the known constructions of $\fqn$-linear MRD codes, represented as subspaces of $\cL_{n,q}$. 

First, we need to specify what is meant by $\fqn$-linearity. The isomorphism described in \eqref{eq:isom_sigma} gives a natural intepretation of this notion as $\fqn$-subspaces of $\cL_{n,q}$. However, this definition is not taking into account equivalence of codes, that is, one may have a rank-metric code $\cC\subseteq \cL_{n,q}$ which is not an $\fqn$-subspace, but it is equivalent to  an $\fqn$-subspace of $\cL_{n,q}$. With this in mind, one can extend the notion of $\fqn$-linearity to codes which \emph{are equivalent to} an $\fqn$-subspace of $\cL_{n,q}$. Sheekey characterized these codes in terms of their idealizers; see \cite[Definition 12]{sheekeysurvey}. Formally, we will say that a rank-metric code $\cC\subseteq \cL_{n,q}$ is \emph{$\fqn$-linear} if $L(\C)$ contains a subring isomorphic to 
\[\mathcal{F}_n:=\{\alpha x \colon \alpha \in \F_{q^n}\}\simeq \F_{q^n}.\]

In \cite{de78}, Delsarte gave the first construction for $\fqn$-linear MRD codes, and few years later, Gabidulin in \cite{ga85a} presented the same class of MRD codes by using linearized polynomials. These codes were then generalized to $\sigma$-polynomials for any generator $\sigma$ of $\Gal(\fqn/\fq)$ by Kshevetskiy and Gabidulin in \cite{Gabidulins}, and they are now known as \emph{Gabidulin codes}.  
Formally, for a given generator $\sigma$ of $\Gal(\fqn/\fq)$ and a positive integer $k\leq n$, the $k$-dimensional \emph{$\sigma$-Gabidulin code} is 
\[\cG_{k,\sigma}=\langle x,x^{\sigma},\ldots,x^{\sigma^{k-1}} \rangle_{\F_{q^n}}.\]
It is easy to see that  $\cG_{k,\sigma}$ is an $\fqn$-linear MRD code  and $L(\cG_{k,\sigma})=R(\cG_{k,\sigma})\simeq \F_{q^n}$; see \cite{liebhold2016automorphism,lunardon2018generalized}.

Five years ago, Sheekey  generalized the family of $\sigma$-Gabidulin codes to what are now known as \emph{twisted Gabidulin codes}. Formally,  the $k$-dimensional \emph{$\sigma$-twisted Gabidulin code} $\mathcal{H}_{k,\sigma}(\eta,h)$ is 
\[\mathcal{H}_{k,\sigma}(\eta,h)=\{a_0x+a_1x^{\sigma}+\ldots+a_{k-1}x^{\sigma^{k-1}}+\sigma^h(a_0)\eta x^{\sigma^{k}} \colon a_i \in \F_{q^n}\}, \]
where $h \in \{0,\ldots,n-1\}$ and $\eta \in \F_{q^n}$ is such that $\N_{q^n/q}(\eta)\neq (-1)^{nk}$. In the same paper Sheekey showed that $\mathcal{H}_{k,\sigma}(\eta,h)$ is an $\F_q$-linear MRD code.
Lunardon, Trombetti and Zhou in \cite{lunardon2018generalized} determined the automorphism group of $\sigma$-twisted Gabidulin codes and studied their equivalence. Moreover,
they also determined their left and right idealizers: if $\eta \neq 0$, then
\[
L(\mathcal{H}_{k,\sigma}(\eta,h))\simeq\F_{q^{\gcd(n,h)}} \quad \mbox{ and } \quad R(\mathcal{H}_{k,\sigma}(\eta,h))\simeq\F_{q^{\gcd(n,k-h)}}.
\]
As a consequence, $\mathcal{H}_{k,\sigma}(\eta,h)$ is $\fqn$-linear if and only if $h=0$.

Further examples of $\fqn$-linear MRD codes can be found in \cite{BZZ,csajbok2018anewfamily,csajbok2018maximum,csajbok2018linearset,MMZ,ZZ} which exist only for $n \in \{6,7,8\}$ and in \cite{longobardi2021large,longobardizanellascatt} which exist for every $n$ even.

\subsection{Equivalence of rank-metric codes}

In this section we recall some known results on the equivalence of rank-metric codes, which will be crucial for showing that the family of codes we are going to introduce is really new. For this purpose, we will use a technique based on applying suitable automorphisms of the Galois group $\Gal(\fqn/\fq)$ to the code and checking the space that their images span. This technique was initiated by Overbeck in \cite{overbeck2008structural} as a structural attack on code-based cryptosystems based on Gabidulin codes. The same idea was elaborated in  \cite{horlemann2017new} for a different purpose: there, Gabidulin codes were partially characterized in terms of the span of the code with its image under a single automorphism, under the additional assumption to have an MRD code. This characterization was later completed in \cite{neri2020systematic}, where the assumption of the code being MRD was dropped. The same strategy was used for the first time to derive code inequivalence results in \cite{puchinger2018construction}. This was later generalized to two automorphisms in \cite{giuzzi2019identifiers}, in order to characterize twisted Gabidulin codes and study their invariants; see also \cite[Section 5.5]{zullo2018linear}. Finally, the general technique for code inequivalence was developed in \cite{neri2019invariants,neri2020equivalence} and was used to give a full characterization of punctured Gabidulin codes and provide a lower bound on their equivalence classes; see also \cite[Chapter 7]{neri2019PhD}.

We first recall a crucial result from \cite{neri2020equivalence}. In the following, for any rank-metric code $\cC\subseteq \cL_{n,q}$ and any automorphism $\sigma \in \Gal(\fqn/\fq)$, we denote by $\sigma(\cC)$ the rank-metric code
$$ \sigma(\cC):=\left\{x^\sigma \circ f(x) : f(x)\in \cC \right\}\subseteq \cL_{n,q}.$$

\begin{proposition}[\textnormal{\cite[Lemma 3.1]{neri2020equivalence}, \cite[Lemma 2]{neri2019invariants}}]\label{prop:equivalence_sum}
 Let $\C_1,\C_2\subseteq \cL_{n,q}$ be two $\fqn$-linear rank-metric codes and let $\sigma_1,\ldots,\sigma_r \in \Gal(\fqn/\fq)$. If $\C_1$ and $\C_2$ are equivalent, then 
 $\sigma_1(\C_1)+\ldots+\sigma_r(\C_1)$ and $\sigma_1(\C_2)+\ldots+\sigma_r(\C_2)$ are also equivalent. 
 In particular,
 $$\dim_{\fqn}(\sigma_1(\C_1)+\ldots+\sigma_r(\C_1))=\dim_{\fqn}(\sigma_1(\C_2)+\ldots+\sigma_r(\C_2)).$$
\end{proposition}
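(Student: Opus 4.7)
The plan is to trace how the Galois twist $\sigma(\cdot) = x^\sigma \circ (\cdot)$ interacts with the equivalence $\C_1 = f_1 \circ \C_2^\rho \circ f_2$. The central tool is the commutation identity
\[
x^\sigma \circ h = h^{(\sigma)} \circ x^\sigma, \qquad \text{where } h^{(\sigma)}(x) := \sum_k \sigma(a_k) x^{\theta^k},
\]
valid for every $h(x) = \sum_k a_k x^{\theta^k} \in \cL_{n,q}$ and every $\sigma \in \Gal(\fqn/\fq)$; it follows from the cyclicity (hence abelianness) of $\Gal(\fqn/\fq)$, which makes $x^\sigma$ commute with each $x^{\theta^k}$ in $\cL_{n,q}$. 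Using this, together with the fact that $\sigma_i$ and $\rho$ commute inside the cyclic group $\Aut(\fqn)$, one obtains for each $i$
\[
\sigma_i(\C_1) = x^{\sigma_i}\circ f_1 \circ \C_2^\rho \circ f_2 = f_1^{(\sigma_i)}\circ \sigma_i(\C_2)^\rho \circ f_2,
\]
and hence
\[
\sum_{i=1}^r \sigma_i(\C_1) = \Bigl(\sum_{i=1}^r f_1^{(\sigma_i)} \circ \sigma_i(\C_2)^\rho\Bigr)\circ f_2.
\]

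The decisive step is then to rewrite the bracketed sum as $g_1 \circ \bigl(\sum_i \sigma_i(\C_2)\bigr)^\rho$ for a single invertible $g_1 \in \cL_{n,q}$, thereby exhibiting a genuine equivalence between the two sums. Here the $\fqn$-linearity of $\C_1$ and $\C_2$ is crucial: the hypothesis $\mathcal{F}_n \subseteq L(\C_j)$ for $j=1,2$ together with the conjugation relation $L(\C_1) = f_1 \circ L(\C_2^\rho)\circ f_1^{-1}$ forces $f_1$ to normalize the copy of $\mathcal{F}_n$ inside $L(\C_2^\rho)$. A direct inspection of this normalizer in the group of units of $\cL_{n,q}$ allows one to reduce, up to absorbing further elements of the idealizers into the codes, to the case $f_1 = \mu\, x^{\theta^s}$ with $\mu \in \fqn^*$ and $s \in \{0,\ldots,n-1\}$. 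For such $f_1$ one has $f_1^{(\sigma_i)} = \sigma_i(\mu) \, x^{\theta^s}$, and the $\fqn$-linearity of each summand $x^{\theta^s}\circ \sigma_i(\C_2)^\rho$ absorbs the nonzero scalar $\sigma_i(\mu)$, so that $f_1^{(\sigma_i)}\circ \sigma_i(\C_2)^\rho = x^{\theta^s}\circ \sigma_i(\C_2)^\rho$. Factoring $x^{\theta^s}$ out of the sum yields
\[
\sum_{i=1}^r \sigma_i(\C_1) = x^{\theta^s}\circ \Bigl(\sum_{i=1}^r \sigma_i(\C_2)\Bigr)^\rho \circ f_2,
\]
which is the desired equivalence with parameters $(x^{\theta^s}, f_2, \rho)$. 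The $\fqn$-dimension equality then follows immediately, since any equivalence preserves $\fqn$-dimension.

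The main obstacle is precisely this normalization of $f_1$: without the $\fqn$-linearity hypothesis the varying factors $f_1^{(\sigma_i)}$ prevent a common left-factorization of $\sum_i f_1^{(\sigma_i)}\circ \sigma_i(\C_2)^\rho$, and no equivalence between the two sums is to be expected. The role of the hypothesis is exactly to absorb the coefficient-wise discrepancy between $f_1$ and its $\sigma_i$-twists into the $\fqn$-scalar action on each summand, collapsing the individually different equivalences $\sigma_i(\C_1) \sim \sigma_i(\C_2)$ into a single equivalence of the sums.
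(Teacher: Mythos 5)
The paper does not actually prove this proposition: it imports it from \cite[Lemma 3.1]{neri2020equivalence} and \cite[Lemma 2]{neri2019invariants}, where the codes are written as $\fqn$-subspaces of $\fqn^m$ and the equivalence carries no left factor $f_1$ at all (it is just $c\mapsto \rho(c)A$ with $A$ an invertible matrix over $\fq$), so the proof there is the one-line computation $\sigma(\rho(c)A)=\rho(\sigma(c))A$. Your argument is a correct reconstruction of the statement directly in the $\cL_{n,q}$ representation: the commutation identity $x^\sigma\circ h=h^{(\sigma)}\circ x^\sigma$ and the relation $\sigma_i(\C_2^\rho)=\sigma_i(\C_2)^\rho$ (both valid because $\Gal(\fqn/\fq)$ is abelian) are exactly the right dictionary, and you correctly identify the crux as the normalization of $f_1$ to $\mu x^{\theta^s}$, after which the varying left factors $f_1^{(\sigma_i)}=\sigma_i(\mu)x^{\theta^s}$ collapse to a single one because each summand $x^{\theta^s}\circ\sigma_i(\C_2)^\rho$ is an $\fqn$-subspace and absorbs the scalar $\sigma_i(\mu)$. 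What your route buys is a proof that lives entirely in the skew-polynomial picture used throughout this paper; what the original route buys is that the normalization issue never arises.

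One caveat on the normalization step. Your normalizer argument is airtight only when $L(\C_2^\rho\circ f_2)=\mathcal{F}_n$ exactly: then $f_1\circ\mathcal{F}_n\circ f_1^{-1}\subseteq L(\C_1)=\mathcal{F}_n$ forces $f_1$ into the normalizer of the Singer cycle $\mathcal{F}_n^*$ in the unit group of $\cL_{n,q}$, which is $\{\mu x^{\theta^s}\}\cong\GammaL(1,q^n)$. If the left idealizers are strictly larger than $\mathcal{F}_n$, conjugation by $f_1$ carries the distinguished copy of $\fqn$ in $L(\C_2^\rho)$ only to \emph{some} maximal subfield of $L(\C_1)$, not necessarily to $\mathcal{F}_n$ itself, and your phrase ``up to absorbing further elements of the idealizers into the codes'' is doing real work that should be spelled out (via a Skolem--Noether argument or otherwise). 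The clean way to close this is to quote \cite[Proposition 3.8]{sheekeyVdV}, which states precisely that an equivalence between $\fqn$-subspaces of $\cL_{n,q}$ can be realized with $f_1(x)=x^{\theta^s}$; the paper itself invokes this in the proof of Theorem \ref{th:equivspecialcase}. With that reference replacing the normalizer sketch, your proof is complete.
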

Always in \cite{neri2020equivalence}, the following particular case has been considered. Let $\C$ be an $\fqn$-linear rank-metric code. For a fixed $\sigma$ generator of $\Gal(\fqn/\fq)$, and for a non-negative integer $i$, let
$$s_i^\sigma(\C):=\dim_{\F_{q^n}}(\C+\sigma(\C)+\ldots+\sigma^i(\C)).$$
Then, the sequence $\{s_i^\sigma(\C)\}_{i \in \mathbb N}$ is invariant under code equivalence. This provides a useful criterion to determine when two codes are not equivalent. Of particular interest, the sequences $\{s_i^\sigma(\C)\}_{i\in \mathbb N}$ when $\C$ is a Gabidulin or a twisted Gabidulin code were completely determined and are given by
\begin{equation}\label{eq:gabidulin_sequence} s_i^\sigma(\cG_{k,\sigma})=k+i, \quad \mbox{ for every } i \leq n-k,\end{equation}
\begin{equation}\label{eq:twisted_sequence} s_i^\sigma(\cH_{k,\sigma}(\eta,0))=\begin{cases}k & \mbox { if } i=0 \\
k+i+1 & \mbox{ if } 1\leq i \leq n-k-1\end{cases}
\end{equation}

\medskip

We will also make use of the following notation. 
Let $f(x)=\sum_{i=0}^{n-1}a_ix^{\sigma^i} \in \cL_{n,q}$, the $\sigma$-\emph{support} of $f$ is defined as follows $\mathrm{supp}_{\sigma}(f)=\{i \colon a_i\ne 0 \}$. 
Clearly, if $f(x),g(x) \in \cL_{n,q}$ such that $\mathrm{supp}_{\sigma}(f)\ne \mathrm{supp}_{\sigma}(g)$, then the $f(x)$ and $g(x)$ are $\fqn$-linearly independent.
This definition can be extended to set of linearized polynomials as done in \cite{lunardon2018generalized}: let $\C$ be a rank-metric code in $\cL_{n,q}$,
then the \emph{universal $\sigma$-support} $\cS_{\sigma}(\mathcal{C})$ of $\mathcal{C}$ is defined as the  subset of $\ZZ{n}=\{0,\ldots,n-1\}$
\[ \cS_{\sigma}(\cC)=\bigcup
_{f \in \cC} \supp_{\sigma}(f).
\]
One may immediately notice that if $\mathrm{supp}_{\sigma}(f)$ is not contained in  $\cS_\sigma(\mathcal{C})$, then $f(x)\notin \C$.

\section{A new family of MRD codes}\label{sec:newfamily}

Let us consider the following situation. Let $\sigma$ be a generator of $\mathrm{Gal}(\fqn/\fq)$ and let $s$ be an integer coprime to $n$ and take the extension field of degree $ns$ over $\fq$. Let $\overline{\sigma}$ be an element of $\Gal(\F_{q^{ns}}/\fq)$ that is an  extension of $\sigma$, that is, $\overline{\sigma}|_{\fqn}=\sigma$. We remark that an extension of $\sigma$ always exists over finite fields. Now, observe that $\langle\overline{\sigma}^s\rangle=\Gal(\F_{q^{ns}}/\F_{q^s})$, as shown in the following remark.
\begin{remark}
Suppose that $\sigma:x \in \fqn \mapsto x^{q^i} \in \fqn$, with $\gcd(i,n)=1$. Let $\overline{\sigma}: x \in \F_{q^{ns}} \mapsto x^{q^j} \in \F_{q^{ns}} \in \Gal(\F_{q^{ns}}/\F_{q})$ be an extension of $\sigma$. Since $\overline{\sigma}|_{\fqn}=\sigma$, then $j=i+\ell n$, for a non negative integer $\ell$. Now, $\overline{\sigma}^s: x \in \F_{q^{ns}} \mapsto x^{q^{s(i+\ell n)}} \in \F_{q^{ns}}$, and since $\gcd(i,n)=1$, we get that $\gcd(j,n)=\gcd(i+\ell n,n)=1$ and hence $\langle\overline{\sigma}^s\rangle=\Gal(\F_{q^{ns}}/\F_{q^s})$.
\end{remark}

Define the maps $\Psi_{\sigma,s}$ and $\Phi_{\sigma,s}$ as
$$ \begin{array}{rccl}
    \Psi_{\sigma,s} : &\cL_{n,\sigma} & \longrightarrow & \cL_{n,\overline{\sigma}^s}  \\
     &\sum_i f_ix^{\sigma^i}  &  \longmapsto & \sum_i f_ix^{\overline{\sigma}^{si}},
\end{array}$$
$$ \begin{array}{rccl}   
   \Phi_{\sigma,s} : &\cL_{n,\sigma} & \longrightarrow & \cL_{n, \sigma^s}  \\
     &\sum_i f_ix^{\sigma^i}  &  \longmapsto & \sum_i f_ix^{\sigma^{si}}.
\end{array} $$
Observe that the map $\Phi_{\sigma,s}$ is a bijection, while the map $\Psi_{\sigma,s}$ is only injective. 

\begin{theorem}\label{th:rankfs}
Let $s$ be an integer coprime to $n$ and take the extension field of degree $ns$ over $\fq$ and let $\sigma$ be a generator of $\mathrm{Gal}(\fqn/\fq)$.
Let $f(x) \in \cL_{n,\sigma}$. Then $\rk(\Psi_{\sigma,s}(f))=\rk(\Phi_{\sigma,s}(f))$.
\end{theorem}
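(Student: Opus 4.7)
The plan is to apply the rank characterization of McGuire--Sheekey (the theorem cited immediately above the statement) separately to $\Phi_{\sigma,s}(f)$ and $\Psi_{\sigma,s}(f)$ and to observe that both computations reduce to the rank of literally the same matrix over $\fqn$. Assuming $f \neq 0$ (otherwise the claim is trivial), let $k$ be the largest index with $f_k \neq 0$, and set $\tau := \sigma^s$ and $\tilde{\tau} := \overline{\sigma}^s$. By the remark immediately preceding the statement, $\tau$ generates $\Gal(\fqn/\fq)$ and $\tilde{\tau}$ generates $\Gal(\F_{q^{ns}}/\F_{q^s})$, while $\tilde{\tau}$ restricted to $\fqn$ equals $\tau$. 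Using $x^{\sigma^{si}} = x^{\tau^i}$ on $\fqn$ and $x^{\overline{\sigma}^{si}} = x^{\tilde{\tau}^i}$ on $\F_{q^{ns}}$, the two maps can be rewritten as
\[
\Phi_{\sigma,s}(f)(x) = \sum_{i=0}^{n-1} f_i\, x^{\tau^i} \in \cL_{n,\tau}, \qquad \Psi_{\sigma,s}(f)(x) = \sum_{i=0}^{n-1} f_i\, x^{\tilde{\tau}^i} \in \cL_{n,\tilde{\tau}},
\]
of degree $k$ with respect to $\tau$ and $\tilde{\tau}$ respectively, and with identical coefficient sequence $f_0,\dots,f_k \in \fqn$.

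Next I would apply the rank formula in each setting. The companion-type matrices $C_{\Phi_{\sigma,s}(f)}$ and $C_{\Psi_{\sigma,s}(f)}$ are literally the same $k \times k$ matrix $C$ over $\fqn$, since their only nontrivial entries are the ratios $-f_i/f_k$. The decisive observation is that the Galois-conjugated companion matrices also coincide: for any $\alpha \in \fqn$ and any integer $j$,
\[
\tilde{\tau}^j(\alpha) = \overline{\sigma}^{sj}(\alpha) = \sigma^{sj}(\alpha) = \tau^j(\alpha),
\]
because $\overline{\sigma}$ agrees with $\sigma$ on $\fqn$. Hence $C^{\tau^j} = C^{\tilde{\tau}^j}$ as matrices over $\fqn$ for every $j$, and so
\[
CC^{\tau} \cdots C^{\tau^{n-1}} - I_k \;=\; CC^{\tilde{\tau}} \cdots C^{\tilde{\tau}^{n-1}} - I_k.
\]
Since the rank of a matrix with entries in $\fqn$ is unchanged when the matrix is viewed over the extension $\F_{q^{ns}}$, both rank formulas produce the same value, which gives
\[
\rk(\Phi_{\sigma,s}(f)) = n - k + \rk\bigl(CC^{\tau}\cdots C^{\tau^{n-1}} - I_k\bigr) = \rk(\Psi_{\sigma,s}(f)).
\]

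No single step looks substantial; the only real care is bookkeeping: identifying $\Phi_{\sigma,s}(f)$ and $\Psi_{\sigma,s}(f)$ as $\tau$- and $\tilde{\tau}$-polynomials of the same degree $k$ (which is exactly the reason the preceding remark pins down the order of $\tilde{\tau}$), and using the equality $\overline{\sigma}|_{\fqn} = \sigma$ at the crucial moment so that the Galois-conjugated entries in the two settings actually match. I do not foresee any further obstacle.
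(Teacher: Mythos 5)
Your proposal is correct and follows essentially the same route as the paper: apply the McGuire--Sheekey rank formula to both $\Phi_{\sigma,s}(f)$ and $\Psi_{\sigma,s}(f)$, note that the companion matrices and all their Galois conjugates coincide because $\overline{\sigma}|_{\fqn}=\sigma$, and conclude that the two ranks agree. Your explicit remark that the rank of the resulting matrix over $\fqn$ is unchanged when viewed over $\F_{q^{ns}}$ is a detail the paper leaves implicit, but the argument is the same.
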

\begin{proof}
Denote by $k$ the $\sigma$-degree of $f(x)$. Then
\[\mathrm{rk}(\Phi_{\sigma,s}(f))=n-k+\mathrm{rk}_{\fqn}(C_{\Phi_{\sigma,s}(f)} C_{\Phi_{\sigma,s}(f)}^{\sigma^s} \cdot \ldots \cdot C_{\Phi_{\sigma,s}(f)}^{\sigma^{s(n-1)}}-I_k)\]
and
\[ \mathrm{rk}(\Psi_{\sigma,s}(f))=n-k+\mathrm{rk}_{\F_{q^{ns}}}(C_{\Psi_{\sigma,s}(f)} C_{\Psi_{\sigma,s}(f)}^{\overline{\sigma}^s} \cdot \ldots \cdot C_{\Psi_{\sigma,s}(f)}^{\overline{\sigma}^{s(n-1)}}-I_k). \]
Since $C_{\Phi_{\sigma,s}(f)}=C_{\Psi_{\sigma,s}(f)} \in \fqn^{k\times k}$ and $\overline{\sigma}\in \mathrm{Gal}(\F_{q^{ns}}/\F_{q})$ such that $\overline{\sigma}|_{\fqn}=\sigma$, it follows that 
\[ C_{\Phi_{\sigma,s}(f)} C_{\Phi_{\sigma,s}(f)}^{\sigma^s} \cdot \ldots \cdot C_{\Phi_{\sigma,s}(f)}^{\sigma^{s(n-1)}}=C_{\Psi_{\sigma,s}(f)} C_{\Psi_{\sigma,s}(f)}^{\overline{\sigma}^s} \cdot \ldots \cdot C_{\Psi_{\sigma,s}(f)}^{\overline{\sigma}^{s(n-1)}}, \]
and hence $\rk(\Psi_{\sigma,s}(f))=\rk(\Phi_{\sigma,s}(f))$.
\end{proof}

As a natural consequence one can get the following result on rank-metric codes.

\begin{corollary}\label{cor:s}
Let $s$ be an integer coprime to $n$.
Let $\mathcal{C}$ be an $\F_p$-linear rank-metric code of $\cL_{n,\sigma}$ and let $\overline{\sigma}\in \mathrm{Gal}(\F_{q^{ns}}/\F_{q})$ be an extension of $\sigma$. 
Suppose that $\Psi_{\sigma,s}(\C)\subseteq \cL_{n,\overline{\sigma}^s}$ has minimum distance $d$. Then $\Phi_{\sigma,s}(\mathcal{C})\subseteq \cL_{n,\sigma^s}$ has minimum distance $d$.
\end{corollary}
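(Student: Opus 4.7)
The plan is to reduce the corollary directly to Theorem \ref{th:rankfs} via two bookkeeping observations: (i) the minimum distance of an $\F_p$-linear rank-metric code equals the minimum rank over its nonzero codewords, and (ii) both $\Psi_{\sigma,s}$ and $\Phi_{\sigma,s}$ are $\F_p$-additive and send the zero polynomial to itself, so their restrictions to $\C$ establish bijections $\C\setminus\{0\}\to \Psi_{\sigma,s}(\C)\setminus\{0\}$ and $\C\setminus\{0\}\to \Phi_{\sigma,s}(\C)\setminus\{0\}$, respectively. Additivity is immediate from the coordinate definition $\sum_i f_ix^{\sigma^i}\mapsto\sum_i f_i x^{\overline{\sigma}^{si}}$ (resp.\ $\sum_i f_ix^{\sigma^{si}}$): both maps act only by relabelling the exponents, so they commute with addition and preserve the $\sigma$-support; injectivity of $\Psi_{\sigma,s}$ and bijectivity of $\Phi_{\sigma,s}$ were observed right before Theorem \ref{th:rankfs}.

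Once these bookkeeping points are in place, the result reads off. First I would note that
\[
d\bigl(\Psi_{\sigma,s}(\C)\bigr)=\min_{f\in\C\setminus\{0\}}\rk\bigl(\Psi_{\sigma,s}(f)\bigr),\qquad d\bigl(\Phi_{\sigma,s}(\C)\bigr)=\min_{f\in\C\setminus\{0\}}\rk\bigl(\Phi_{\sigma,s}(f)\bigr),
\]
thanks to $\F_p$-linearity and the bijections above. Then by Theorem \ref{th:rankfs}, for every $f\in\cL_{n,\sigma}$ one has $\rk(\Psi_{\sigma,s}(f))=\rk(\Phi_{\sigma,s}(f))$, so the two minima are term-by-term equal. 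Hence $d(\Phi_{\sigma,s}(\C))=d(\Psi_{\sigma,s}(\C))=d$.

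There is essentially no obstacle once Theorem \ref{th:rankfs} is available; the only point requiring a sentence of care is verifying that the minimum-distance formula as a minimum of ranks applies to $\Psi_{\sigma,s}(\C)$, which is an $\F_p$-linear subset of $\cL_{n,\overline{\sigma}^s}$ (with coefficients lying in the subfield $\fqn\subseteq\F_{q^{ns}}$), but this uses only $\F_p$-linearity of the image, inherited from that of $\C$. The parallel statement for $\Phi_{\sigma,s}(\C)\subseteq\cL_{n,\sigma^s}$ is completely analogous.
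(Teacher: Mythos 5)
Your proposal is correct and matches the paper's intent: the paper offers no written proof of this corollary, presenting it as a ``natural consequence'' of Theorem \ref{th:rankfs}, and the argument it implicitly relies on is exactly the one you spell out (additivity of the two maps, the resulting bijection on nonzero codewords, and termwise equality of ranks). Your extra care about why the minimum distance of the $\F_p$-linear images equals the minimum rank over nonzero codewords is a reasonable bookkeeping addition, not a deviation.
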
 

In the spirit of \cite[Theorem 3.2]{lunardon2018generalized}, Corollary \ref{cor:s} can be specialized to MRD codes as follows.

\begin{corollary}\label{cor:s2}
Let $s$ be an integer coprime to $n$.
Let $\mathcal{C}$ be an $\F_p$-linear MRD code of $\cL_{n,\sigma}$ with minimum distance $d$ and let $\overline{\sigma}\in \mathrm{Gal}(\F_{q^{ns}}/\F_{q})$ be an extension of $\sigma$.
Assume that $\Psi_{\sigma,s}(\C) \subseteq \cL_{n,\overline{\sigma}^s}$ has minimum distance at least $d$. Then $\Phi_{\sigma,s}(\mathcal{C})\subseteq \cL_{n,\sigma^s}$ is an MRD code.
\end{corollary}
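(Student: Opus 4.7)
The plan is to combine the rank-preservation statement of Theorem \ref{th:rankfs} with a cardinality-versus-distance comparison against the Singleton bound. Observe that Corollary \ref{cor:s2} weakens the hypothesis of Corollary \ref{cor:s} (from ``minimum distance exactly $d$'' to ``at least $d$'') in exchange for the stronger conclusion ``MRD,'' so we cannot simply invoke Corollary \ref{cor:s}; instead, I would first reprove the distance bound and then pin down the parameters by counting.

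First, I would establish $d(\Phi_{\sigma,s}(\mathcal{C}))\geq d$. Since $\mathcal{C}$ is $\F_p$-linear and both $\Phi_{\sigma,s}$ and $\Psi_{\sigma,s}$ are $\F_p$-linear maps by construction, the images $\Phi_{\sigma,s}(\mathcal{C})\subseteq \cL_{n,\sigma^s}$ and $\Psi_{\sigma,s}(\mathcal{C})\subseteq \cL_{n,\overline{\sigma}^s}$ are also $\F_p$-linear rank-metric codes, so their minimum distance equals the smallest rank of a nonzero codeword. Theorem \ref{th:rankfs} yields $\rk(\Phi_{\sigma,s}(f))=\rk(\Psi_{\sigma,s}(f))$ for every $f\in \cL_{n,\sigma}$, and in particular $d(\Phi_{\sigma,s}(\mathcal{C}))=d(\Psi_{\sigma,s}(\mathcal{C}))\geq d$ by hypothesis.

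Next, I would exploit the cardinality of $\mathcal{C}$. Because $\gcd(s,n)=1$, the automorphism $\sigma^s$ still generates $\Gal(\fqn/\fq)$, so $\cL_{n,\sigma^s}$ is the same ambient $n\times n$ rank-metric space over $\F_q$ to which Delsarte's Singleton-like bound $|\mathcal{D}|\leq q^{n(n-d(\mathcal{D})+1)}$ applies. Since $\Phi_{\sigma,s}$ is a bijection,
\[ |\Phi_{\sigma,s}(\mathcal{C})|=|\mathcal{C}|=q^{n(n-d+1)}, \]
where the last equality uses that $\mathcal{C}$ is MRD with minimum distance $d$. Comparing this size with the Singleton bound applied to $\Phi_{\sigma,s}(\mathcal{C})$ forces $d(\Phi_{\sigma,s}(\mathcal{C}))\leq d$; together with the previous step this gives equality and shows that $\Phi_{\sigma,s}(\mathcal{C})$ meets the Singleton bound and is therefore MRD. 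There is no real obstacle here: the argument relies only on Theorem \ref{th:rankfs}, the bijectivity of $\Phi_{\sigma,s}$, and the Singleton bound. The only point to watch is invoking $\F_p$-linearity (rather than merely $\fqn$-linearity) when identifying minimum distance with the minimum nonzero rank of codewords, which is guaranteed by the hypothesis.
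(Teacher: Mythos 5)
Your proof is correct and follows exactly the route the paper intends: the paper states this corollary without a written proof, as an immediate consequence of Theorem \ref{th:rankfs} (equivalently Corollary \ref{cor:s}) combined with the bijectivity of $\Phi_{\sigma,s}$ and the Singleton-like bound, which is precisely the argument you spell out. Your side remark that Corollary \ref{cor:s} cannot be invoked directly is harmless but not essential, since one could apply it with $d$ replaced by the actual minimum distance $d'\geq d$ of $\Psi_{\sigma,s}(\C)$ and then run the same counting step.
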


\begin{remark}
 The assumption on $\Psi_{\sigma,s}(\C) \subseteq \cL_{n,\overline{\sigma}^s}$ in Corollary \ref{cor:s2} is satisfied by all the known examples of MRD codes, that is the minimum distance of $\C\subseteq \cL_{n,\sigma}$ coincides with the minimum distance of $\Psi_{\sigma,s}(\C)$, for any $s$. So, one could use Corollary \ref{cor:s2} and the fact that $\theta$-Gabidulin codes and $\theta$-twisted Gabidulin codes are MRD when $\theta$ is the $q$-Frobenius automorphism, to prove that $\sigma$-Gabidulin and $\sigma$-twisted Gabidulin codes are MRD, for every generator $\sigma$ of $\Gal(\fqn/\fq)$. This is indeed the technique used in \cite{Gabidulins} for  $\sigma$-Gabidulin codes and in \cite{lunardon2018generalized} for $\sigma$-twisted Gabidulin codes.
\end{remark}

Recently, in \cite{longobardi2021large} and in \cite{longobardizanellascatt} two families of MRD codes that exist for infinitely many values of $n$ were presented. We recall them via the following two theorems.

\begin{theorem}[\textnormal{\cite[Theorem 2.4]{longobardizanellascatt}}]\label{th:scattLZ}
Let $n=2t$, $t \geq 3$, $q$ be an odd prime power and let $\sigma$ be a generator of $\mathrm{Gal}(\fqn/\fq)$.
If $t$ is even, or $t$ is odd and $q\equiv 1 \pmod{4}$, then the rank-metric code
\[\C=\langle x,\psi(x)  \rangle_{\fqn},\]
with $\psi(x)=x^\sigma+x^{\sigma^{t-1}}-x^{\sigma^{t+1}}+x^{\sigma^{2t-1}}$, is an MRD code.
\end{theorem}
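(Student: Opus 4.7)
The plan is to prove MRD of $\C$ via the scattered-polynomial criterion for $\psi$, exploiting the involutive symmetry $\tau:=\sigma^t$ (the unique order-$2$ element of $\Gal(\fqn/\fq)$). Since $\C$ is $\fqn$-linear of dimension $2$, by the Singleton-like bound it is MRD if and only if every nonzero codeword $ax+b\psi(x)$ has rank at least $n-1$: the case $b=0$ gives rank $n$, and for $b\ne 0$ the problem reduces, after scaling, to showing that $\psi(x)+cx$ has $\fq$-kernel of dimension at most $1$ for every $c\in\fqn$. Equivalently, no two $\fq$-linearly independent $y_1,y_2\in\fqn^*$ can satisfy $\psi(y_1)/y_1=\psi(y_2)/y_2$; this is the \emph{scattered} condition on $\psi$.

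The structural input is a pair of identities in $\cL_{n,\sigma}$. Using $\sigma^{2t}=\mathrm{id}$ and the $\tau$-invariance of the $\pm1$ coefficients of $\psi$, one checks that
\[ \psi(x)+\sigma^t\circ\psi(x)=2\bigl(x^{\sigma^{t-1}}+x^{\sigma^{2t-1}}\bigr),\qquad \psi(x)-\sigma^t\circ\psi(x)=2\bigl(x^\sigma-x^{\sigma^{t+1}}\bigr); \]
both right-hand sides are binomials rather than quadrinomials (the factor $2$ being invertible since $q$ is odd). Suppose for a contradiction that $\fq$-linearly independent $y_1,y_2$ both satisfy $\psi(y_i)=-cy_i$. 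Rewriting $\psi(y_1)y_2=\psi(y_2)y_1$ in terms of the Dickson-type brackets $D_i:=y_1^{\sigma^i}y_2-y_2^{\sigma^i}y_1$ yields
\[ D_1+D_{t-1}-D_{t+1}+D_{2t-1}=0, \]
and the symmetries $D_{2t-1}=-\sigma^{-1}(D_1)$ and $D_{t+1}=-\sigma^{1-t}(D_{t-1})$ collapse this to a single relation between $D_1$, $D_{t-1}$ and their $\sigma$-conjugates. Applying $\tau$ and using the second identity above produces an independent relation in the same quantities.

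Eliminating the unknowns by a suitable norm argument (essentially $\N_{q^n/q}$, or equivalently evaluating a Moore-type determinant) reduces the resulting system to a scalar equation $\xi^2=\delta$ with $\xi\in\fq$ and $\delta\in\fq$ an explicit expression in the $D_i$'s. The role of the hypotheses is to force $\delta$ to be a non-square in $\fq$: when $t$ is even this is automatic, while when $t$ is odd an extra factor of $-1$ appears in $\delta$, so non-squareness of $\delta$ becomes equivalent to $-1\in(\fq^*)^2$, i.e., to $q\equiv 1\pmod 4$. In either admissible regime the equation $\xi^2=\delta$ is insoluble in $\fq$, contradicting $\xi\in\fq$; hence no such $y_1,y_2$ exist and $\psi$ is scattered.

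The main obstacle is the clean extraction of the quadratic relation $\xi^2=\delta$ and the tracking of how the sign of $\delta$ depends on the parity of $t$. As a side remark, one could alternatively invoke Corollary~\ref{cor:s2} to reduce to the Frobenius case $\sigma=\theta$ and then quote the original Longobardi--Zanella proof; however, verifying the hypothesis of Corollary~\ref{cor:s2} on $\Psi_{\theta,s}(\psi)\in\cL_{n,\bar\theta^s}$ requires a scattered-type analysis over $\F_{q^{ns}}/\F_{q^s}$ of comparable complexity, so no genuine simplification is obtained.
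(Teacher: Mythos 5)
This statement is not proved in the paper at all: it is imported verbatim from Longobardi--Zanella (cited as \cite[Theorem 2.4]{longobardizanellascatt}), so your attempt can only be measured against the original argument, which is a long and delicate computation. Your preliminary reductions are all correct and worth keeping: a $2$-dimensional $\fqn$-linear code is MRD iff every nonzero codeword has rank $\geq n-1$, which for $b\neq 0$ is exactly the scatteredness of $\psi$; the identities $\psi\pm x^{\sigma^t}\circ\psi = $ (binomial) are verified by direct computation using $\sigma^{2t}=\mathrm{id}$; and the relation $D_1+D_{t-1}-D_{t+1}+D_{2t-1}=0$ together with the symmetries $D_{2t-1}=-\sigma^{-1}(D_1)$ and $D_{t+1}=-\sigma^{1-t}(D_{t-1})$ all check out. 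The problem is that everything after that point is asserted rather than argued, and the assertions cover precisely the hard part of the theorem.

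Two concrete gaps. First, ``applying $\tau$ and using the second identity above produces an independent relation'' does not work as stated: applying $\tau=\sigma^t$ to the single collapsed relation merely yields its $\sigma^t$-conjugate, which contains no information about $(D_1,D_{t-1})$ beyond the original equation. To obtain a genuinely independent constraint you must either exploit the Pl\"ucker/Dickson-type relations that the full family $\{D_i\}$ satisfies (the $D_i$ are all built from the same pair $y_1,y_2$ and are far from algebraically independent), or split the eigenvalue equations $\psi(y_i)=-cy_i$ themselves via your binomial decomposition --- but that introduces the new unknowns $\sigma^t(y_i)$ and the quantities $c\pm\sigma^t(c)$, producing a semilinear system whose solution space must then be bounded; neither route is carried out. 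Second, the entire endgame --- the ``suitable norm argument'' eliminating the unknowns, the arrival at $\xi^2=\delta$ with $\xi\in\fq$, and the claim that $\delta$ is a non-square exactly when $t$ is even or ($t$ odd and $q\equiv 1\pmod 4$) --- is a black box. This is the only place the hypotheses of the theorem can enter, yet you give no expression for $\xi$ or $\delta$, no reason why $\xi$ lies in $\fq$, and no computation of the quadratic character of $\delta$. As written this is a plausible strategy outline, not a proof; to complete it you would have to reproduce essentially the full Longobardi--Zanella analysis.
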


\begin{theorem}[\textnormal{\cite[Theorem 3.1]{longobardi2021large}}] \label{th:scatteredLMTZ}
Let $n=2t$, $t \geq 3$, let $q$ be an odd prime power and let $\theta \colon x \in \fqn \longmapsto x^q \in \fqn$. For each $h \in \F_{q^n} \setminus \F_{q^t}$ such that $\N_{q^n/q^t}(h)=-1$,
the rank-metric code
\[ \C_{h,t}=\langle x,\psi_{h,t}(x) \rangle_{\fqn}, \]
where 
\[
    \psi_{h,t}(x)=x^\theta+x^{\theta^{t-1}}+h\theta(h)x^{\theta^{t+1}}+h\theta^{-1}(h^{-1})x^{\theta^{2t-1}} \in \cL_{n,q} ,
\]
is an MRD code.
\end{theorem}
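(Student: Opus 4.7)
The strategy is to reduce the MRD condition to the scattered property of $\psi_{h,t}$ and then verify the latter by exploiting the norm hypothesis on $h$. Since $\C_{h,t}$ is a $2$-dimensional $\F_{q^n}$-subspace of $\cL_{n,q}$, its nonzero elements are scalar multiples either of $x$ (which has rank $n$) or of $f_c(x):=\psi_{h,t}(x)+cx$ for some $c\in\F_{q^n}$. The MRD property for such a code is equivalent to minimum rank distance $n-1$; hence it suffices to show that $f_c(x)$ has $\F_q$-kernel of dimension at most $1$ for every $c\in\F_{q^n}$, i.e., that $\psi_{h,t}$ is scattered.

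To verify this, I would argue by contradiction: suppose there exist $\F_q$-linearly independent $x_0,x_1\in\F_{q^n}$ with $f_c(x_0)=f_c(x_1)=0$. Eliminating $c$ gives the bilinear identity
\[
x_0\,\psi_{h,t}(x_1)-x_1\,\psi_{h,t}(x_0)=0,
\]
supported on the exponent set $\{1,t-1,t+1,2t-1\}$, which is symmetric both under $i\mapsto -i$ and under $i\mapsto 2t-i$ modulo $n$. Applying $\theta^{t}$ to this identity produces a conjugate relation, and here the norm condition $\N_{q^n/q^t}(h)=-1$ enters decisively: it gives $h^{q^t}=-h^{-1}$, and by iteration $(hh^q)^{q^t}=(hh^q)^{-1}$ and $(h h^{-q^{-1}})^{q^t}=h^{q^{-1}-1}$. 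An $\F_{q^n}$-linear combination of the original and the conjugate relations cancels the ``middle'' terms in $x^{q^{t\pm 1}}$ and leaves an identity only in the ``low'' exponents $x^{q^{\pm 1}}$; dividing by $x_0 x_1$ and writing $u=x_1/x_0$, this becomes a Frobenius equation in $u$ of very small $q$-degree, whose set of $\F_q$-solutions is a $1$-dimensional $\F_q$-subspace. This contradicts the linear independence of $x_0,x_1$ and establishes the scattered property.

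The main obstacle is the bookkeeping of the Frobenius images of $h$ through the combination step: one must check that the cancellation of the $q^{t\pm 1}$-terms is precisely what $h^{q^t+1}=-1$ provides, and that no weaker hypothesis on $h$ would suffice. A convenient organizing device is the decomposition $\F_{q^n}=\F_{q^t}\oplus\xi\F_{q^t}$ with $\xi^{q^t}=-\xi$, which is available because $q$ is odd; writing $x=u+v\xi$ and $h=a+b\xi$ with $a,b\in\F_{q^t}$, the norm condition becomes $a^2-b^2\xi^2=-1$ (with $b\neq 0$, since $h\notin\F_{q^t}$), and $f_c(x)=0$ splits into a pair of coupled $\F_{q^t}$-linear equations in $u,v$ whose simultaneous solution space can be shown to be at most $1$-dimensional over $\F_q$. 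Alternatively, the same verification can be carried out by applying McGuire's companion-matrix criterion (the last theorem of Section~\ref{sec:rankmetric_linearized}) to $f_c$ with its coefficients $(c,1,1,hh^q,hh^{-q^{-1}})$, which reduces the problem to showing that an explicit $(n-1)\times(n-1)$ matrix over $\F_{q^n}$ has rank at least $n-2$ for every $c$; the norm relation on $h$ then plays the role of guaranteeing the nonvanishing of the relevant minors.
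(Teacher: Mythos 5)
First, a point of context: the paper does not prove this statement at all — it is quoted from the cited reference \cite{longobardi2021large} and used as a black box, the introduction only remarking that the original proof goes through the correspondence between MRD codes and scattered polynomials. Your opening reduction is correct and is indeed the route of the source: every nonzero element of $\C_{h,t}$ is proportional to $x$ or to $f_c(x)=\psi_{h,t}(x)+cx$, so the MRD property (minimum distance $n-1$) is equivalent to every $f_c$ having kernel of $\F_q$-dimension at most $1$, i.e.\ to $\psi_{h,t}$ being scattered.

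The gap is in the verification of scatteredness, which is the entire content of the theorem. Your central step claims that applying $\theta^t$ to the relation $x_0\psi_{h,t}(x_1)-x_1\psi_{h,t}(x_0)=0$ and taking an $\F_{q^n}$-linear combination with the original cancels the terms of exponent $\theta^{t\pm1}$. This does not work as stated: the conjugated relation is supported on the monomials $x_0^{q^t}x_1^{q^j}-x_1^{q^t}x_0^{q^j}$, which are distinct from the monomials $x_0x_1^{q^j}-x_1x_0^{q^j}$ carrying the original relation, so no $\F_{q^n}$-linear combination of the two identities can effect the claimed cancellation. The identities $h^{q^t}=-h^{-1}$ and $(h\theta(h))^{q^t}=(h\theta(h))^{-1}$ that you record are correct, but they only show that $\theta^t\circ\psi_{h,t}$ is again a polynomial of the same shape; they do not produce the collapse to an equation of small $q$-degree in $u=x_1/x_0$ that you assert (note also that dividing $x_0x_1^{q^i}-x_1x_0^{q^i}$ by $x_0x_1$ gives $x_1^{q^i-1}-x_0^{q^i-1}$, not a function of $u$ alone). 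The two alternatives you offer — the splitting $\F_{q^n}=\F_{q^t}\oplus\xi\F_{q^t}$ reducing $f_c(x)=0$ to a coupled $\F_{q^t}$-linear system, and McGuire's companion-matrix criterion — are reasonable starting points (the first is close to what the source actually does), but in both cases the decisive verification is deferred with phrases like ``can be shown to be at most $1$-dimensional'' or ``guarantees the nonvanishing of the relevant minors'', and it is exactly there that the hypothesis $\N_{q^n/q^t}(h)=-1$ must be exploited through a nontrivial computation. As it stands, the proposal is a plausible plan whose key step is either missing or, as literally described, incorrect.
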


In the following result we extend the above constructions, by using Corollary \ref{cor:s2}.

\begin{theorem}\label{th:newfamily}
Let $n=2t$, $t \geq 3$, $q$ be an odd prime power and let $\sigma$ be a generator of $\mathrm{Gal}(\fqn/\fq)$. For any $h \in \F_{q^n}$ such that $\N_{q^n/q^t}(h)=-1$, the rank-metric code
$$\mathcal{C}_{h,t,\sigma}= \langle x, \psi_{h,t,\sigma}(x) \rangle_{\F_{q^n}},$$
with
\[ \psi_{h,t,\sigma}(x)=x^\sigma +x^{\sigma^{t-1}}+h\sigma(h)x^{\sigma^{t+1}}+h\sigma^{-1}(h^{-1})x^{\sigma^{2t-1}}, \]
is an MRD code.
\end{theorem}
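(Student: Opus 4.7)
The plan is to apply Corollary~\ref{cor:s} with starting generator the $q$-Frobenius $\theta$ of $\fqn$. Write $\sigma=\theta^s$ with $\gcd(s,n)=1$ (so $s$ is odd, since $n=2t$ is even), and let $\overline\theta$ denote the $q$-Frobenius on $\F_{q^{ns}}$, which extends $\theta$; then $\overline\tau:=\overline\theta^s$ is the $q^s$-Frobenius on $\F_{q^{ns}}$ and generates $\Gal(\F_{q^{ns}}/\F_{q^s})$. Define
\[ g(x) := x^\theta + x^{\theta^{t-1}} + h\sigma(h)\,x^{\theta^{t+1}} + h\sigma^{-1}(h^{-1})\,x^{\theta^{2t-1}} \in \cL_{n,\theta}, \]
and set $\mathcal{D}:=\langle x, g(x)\rangle_{\fqn}$. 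Since $\Phi_{\theta,s}$ rescales only the exponents and $\theta^{sj}=\sigma^j$ for each $j$, a direct computation yields $\Phi_{\theta,s}(\mathcal{D})=\mathcal{C}_{h,t,\sigma}$.

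Next I compute $\Psi_{\theta,s}(g)\in\cL_{n,\overline\tau}$. As $\sigma$ and $\overline\tau$ coincide on $\fqn$ (both being the $q^s$-th-power map), one has $h\sigma(h)=h\overline\tau(h)$ and $h\sigma^{-1}(h^{-1})=h\overline\tau^{-1}(h^{-1})$, hence
\[ \Psi_{\theta,s}(g) = x^{\overline\tau} + x^{\overline\tau^{t-1}} + h\overline\tau(h)\,x^{\overline\tau^{t+1}} + h\overline\tau^{-1}(h^{-1})\,x^{\overline\tau^{2t-1}}. \]
This is precisely the LMTZ polynomial of Theorem~\ref{th:scatteredLMTZ} for the extension $\F_{q^{ns}}/\F_{q^s}$ at parameter $h$. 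Since $s$ is odd, $q^{st}\equiv q^t\pmod{q^n-1}$, so $\N_{q^{ns}/q^{st}}(h)=h^{1+q^{st}}=h^{1+q^t}=-1$, transferring the norm condition to the larger field; moreover $\F_{q^{st}}\cap\fqn=\F_{q^t}$ since $\gcd(st,2t)=t$, so $h\notin\F_{q^t}$ implies $h\notin\F_{q^{st}}$. Therefore, when $h\notin\F_{q^t}$, Theorem~\ref{th:scatteredLMTZ} applies in $\F_{q^{ns}}/\F_{q^s}$: the $\F_{q^{ns}}$-span $\langle x,\Psi_{\theta,s}(g)\rangle$ is MRD of minimum rank distance $2t-1$, and in particular $\Psi_{\theta,s}(g)$ is scattered with rank exactly $2t-1$. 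Thus $\Psi_{\theta,s}(\mathcal{D})$ is contained in this MRD code yet contains $\Psi_{\theta,s}(g)$, forcing its minimum rank distance to equal $2t-1$. Corollary~\ref{cor:s} then transports this to $\mathcal{C}_{h,t,\sigma}=\Phi_{\theta,s}(\mathcal{D})$, proving it is MRD.

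The main obstacle is the edge case $h\in\F_{q^t}$, which is excluded by Theorem~\ref{th:scatteredLMTZ} and must be handled separately. The norm condition forces $h^2=-1$ in that case. If $h\in\fq$, the coefficients simplify to $-1$ and $1$, so $\psi_{h,t,\sigma}$ coincides with the polynomial in Theorem~\ref{th:scattLZ}, which applies directly; for $h\in\F_{q^t}\setminus\fq$, a tailored rank computation---e.g.\ via the McGuire--Sheekey rank characterization, or a direct adaptation of the scattered polynomial argument of \cite{longobardi2021large}---is needed to close the gap.
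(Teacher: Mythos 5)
Your main argument is the same as the paper's: pull $\mathcal{C}_{h,t,\sigma}$ back along $\Phi_{\theta,s}$ to a code $\mathcal{D}\subseteq\cL_{n,\theta}$, push $\mathcal{D}$ forward with $\Psi_{\theta,s}$, recognize the image inside the code of Theorem~\ref{th:scatteredLMTZ} for the extension $\F_{q^{ns}}/\F_{q^s}$, and transfer the minimum distance back via Corollary~\ref{cor:s}/\ref{cor:s2}. Your verifications that the norm condition and the condition $h\notin\F_{q^{st}}$ transfer to the larger field match the paper's, and you are in fact more explicit than the paper about which polynomial must be fed to $\Psi_{\theta,s}$ (the one with coefficients $h\sigma(h)$, $h\sigma^{-1}(h^{-1})$ but exponents in $\theta$). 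One small remark: you do not need, and have not really justified, that $\Psi_{\theta,s}(g)$ has rank \emph{exactly} $2t-1$; scatteredness only bounds kernels by dimension one, so the rank could a priori be $2t$. It suffices that the minimum distance of $\Psi_{\theta,s}(\mathcal{D})$ is at least $2t-1$: since $|\mathcal{C}_{h,t,\sigma}|=q^{2n}$ exceeds the Singleton bound for minimum distance $2t$, the distance of $\Phi_{\theta,s}(\mathcal{D})$ is then forced to equal $2t-1$ and the code is MRD. This is exactly what Corollary~\ref{cor:s2} packages, and it is the route the paper takes.

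The genuine gap is the case $h\in\F_{q^t}\setminus\F_q$, which you correctly isolate but leave open. No new rank computation is needed to close it. The norm condition gives $h^2=-1$, hence $h\in\F_{q^2}\setminus\F_q$, and $t$ must be even for such an $h$ to lie in $\F_{q^t}$. Since $s$ is odd, $\sigma(h)=h^{q^s}=h^{q}=-h$ and likewise $\sigma^{-1}(h^{-1})=-h^{-1}$, so
\[\psi_{h,t,\sigma}(x)=x^\sigma+x^{\sigma^{t-1}}+x^{\sigma^{t+1}}-x^{\sigma^{2t-1}}.\]
Setting $\theta=\sigma^{t-1}$, which is again a generator because $\gcd(t-1,2t)=1$ for $t$ even, and reducing exponents modulo $2t$ rewrites this as $x^\theta+x^{\theta^{t-1}}-x^{\theta^{t+1}}+x^{\theta^{2t-1}}$, i.e.\ the polynomial of Theorem~\ref{th:scattLZ}, whose hypotheses hold since $t$ is even. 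This is how the paper finishes. Note also that in your subcase $h\in\F_q$ the hypothesis of Theorem~\ref{th:scattLZ} is automatically satisfied, because $h^2=-1$ with $h\in\F_q$ forces $q\equiv 1\pmod 4$.
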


\begin{proof}
First assume that $h \notin \F_{q^t}$.
Let $\sigma\colon x \in \fqn \mapsto x^{q^s}\in \fqn$, with $\gcd(s,n)=1$, and consider $\Psi_{\theta,s}(\C_{h,t,\theta})\subseteq \cL_{n,\overline{\sigma}}$, where $\theta\colon x \in \fqn \mapsto x^{q}\in \fqn$ and $\overline{\sigma}\colon x \in \F_{q^{ns}} \mapsto x^{q^s}\in \F_{q^{ns}}$.
We now show that $\Psi_{\theta,s}(\C_{h,t,\theta})$ is an MRD code.
First note that $h \notin \F_{q^{st}}$. Indeed if $h \in \F_{q^{st}}$, then $h \in \F_{q^{2t}} \cap \F_{q^{st}} = \F_{q^t}$, a contradiction. Moreover, since $\overline{\sigma}$ is a generator of $\mathrm{Gal}(\F_{q^{ns}}/\F_{q^s})$ then $\overline{\sigma}^t(h)h=\N_{q^n/q^t}(h)=-1$. 
By applying Theorem \ref{th:scatteredLMTZ}, $\psi_{\theta,s}(\C_{h,t,\theta})$ is a rank-metric code of $\cL_{n,\overline{\sigma}}$ having minimum distance $n-1$. Corollary \ref{cor:s2} implies that $\Phi_{\theta,s}(\C_{h,t,\theta})=\C_{h,t,\sigma}$ is an MRD code contained in $\cL_{n,\sigma}$.
If $h$ is in $\F_{q^t}$, then $\N_{q^n/q^t}(h)=h^2=-1$ and hence $h \in \F_{q^2}$.
So, if $h \in \fq$ then
\[\psi_{h,t,\sigma}(x)=x^\sigma+x^{\sigma^{t-1}}-x^{\sigma^{t+1}}+x^{\sigma^{2t-1}},\]
and if $h \in \F_{q^2}\setminus \fq$ (and hence $t$ is even) then
\[\psi_{h,t,\sigma}(x)=x^\sigma+x^{\sigma^{t-1}}+x^{\sigma^{t+1}}-x^{\sigma^{2t-1}}=x^\theta+x^{\theta^{t-1}}-x^{\theta^{t+1}}+x^{\theta^{2t-1}},\]
where $\theta=\sigma^{t-1}$, so that the rank-metric code $\C_{h,t,\sigma}$ is MRD because of Theorem \ref{th:scattLZ}.
\end{proof}

When choosing $\sigma$ as $x\in \fqn \mapsto x^{q}\in \fqn$, the MRD codes $\C_{h,t,\sigma}$ coincide with the MRD codes in Theorem \ref{th:scatteredLMTZ}.
Whereas, if $h$ is in $\F_{q^t}$ in Theorem \ref{th:newfamily}, then the MRD code $\C_{h,t,\sigma}$ coincide with those in Theorem \ref{th:scattLZ}, as already pointed out in Theorem \ref{th:newfamily}.

\begin{remark}
Let $V$ be a $2$-dimensional vector space over $\fqn$ and let $\Lambda=\PG(V,\F_{q^n})$.
Let $U$ be an $\fq$-subspace of $V$ of dimension $k$, then the set of points
\[ L_U=\{\la {\bf u} \ra_{\mathbb{F}_{q^n}} : {\bf u}\in U\setminus \{{\bf 0} \}\}\subseteq \Lambda \]
is said to be an $\fq$-\emph{linear set of rank $k$}.
The $\fq$-linear set $L_U$ is called \emph{scattered} if $|L_U|=\frac{q^k-1}{q-1}$. 
We refer to \cite{lavrauw2015field} and \cite{polverino2010linear} for comprehensive references on linear sets.
Let $L_U$ be an $\fq$-linear set of rank $n$ in $\Lambda$. Since ${\rm P\Gamma L}(1,q^n)$ is $3$-transitive on $\PG(1,q^n)$, we can suppose that (up to ${\rm P\Gamma L}(1,q^n)$-equivalence) $L_U$ does not contain the point $\langle(0,1)\rangle_{\fqn}$, so that there exists $f(x) \in \cL_{n,q}$
 such that $L_U$ is ${\rm P\Gamma L}(1,q^n)$-equivalent to
\[ L_f=\{ \langle (x,f(x)) \rangle_{\fqn} \colon x \in \fqn^* \}. \]
Sheekey in \cite{sheekey2016new} called a $\sigma$-polynomial $f(x)$ \emph{scattered} if $L_f$ turns out to be a scattered $\fq$-linear set.
In \cite{sheekey2016new}, it was shown a correspondence between scattered linear sets of $\mathrm{PG}(1,q^n)$ and $\fqn$-linear MRD codes in $\cL_{n,q}$, which has been later generalized in \cite{csajbok2017maximum,lunardon2017mrd,polverino2020connections,sheekeyVdV,zini2020scattered}.
In particular, the linear set $L_f$ is scattered if and only if $\mathcal{C}_f=\langle x,f(x)\rangle_{\fqn}$ is an MRD code. 
So, under the assumption of Theorem \ref{th:newfamily}, we have proved that the  $\psi_{h,t,\sigma}(x)$ is a scattered polynomial.
In \cite{longobardi2021large,longobardizanellascatt}, the results are stated using the terminology of linear sets.
\end{remark}

In the next section we study in details the equivalence issue of the codes in Theorem \ref{th:newfamily} with themselves and with the other known $\fqn$-linear MRD codes, proving that the family of $\C_{h,t,\sigma}$'s contains new MRD codes.

\section{Study of the equivalence of the new family}\label{sec:equivalence}

This section is dedicated to the equivalence of the new family of codes $\cC_{h,t,\sigma}$. We  first discuss their novelty,  showing that they are not equivalent to any family of $\fqn$-linear MRD codes known. Later, we  study their equivalence classes, characterizing exactly when two codes of the form $\cC_{h,t,\sigma}$ and $\cC_{k,t,\theta}$ are equivalent. From these results we can  deduce the exact number of equivalence classes of such codes. However, since this formula is a bit involved, we also provide a more effective lower bound.

We fix here the following notation that will be kept for the whole section. From now on, we consider $\sigma$ to be a generator of $\Gal(\fqn/\fq)$ and $h\in\fqn$ to be such that $\sigma^t(h)h=-1$. Furthermore,  we fix $n=2t$, for some positive integer $t \geq 5$. Notice that this assumption on $t$ is taken in order to ease the computations, even though the codes $\cC_{h,t,\sigma}$ are MRD also for $t=3,4$. When
$t=3$ the inequivalence with the other known $\fqn$-linear MRD codes such as Gabidulin codes, twisted Gabidulin codes and those in \cite{csajbok2018anewfamily,csajbok2018maximum,csajbok2018linearset,MMZ,ZZ} has been proved in \cite[Section 4]{BZZ}.
However when
$t \in \{3,4\}$ the computations of this section become more complicated, since some of the arguments that we are going to use do not work. We warn the reader that  for these two cases some of the results presented might not be true.

We first start with an auxiliary result which will be widely used in this section.

\begin{lemma}\label{lem:auxiliary}
 Let $h\in\fqn$ be such that $\N_{q^n/q^t}(h)=\sigma^t(h)h=-1$. Then $\sigma^2(h)h\neq 1$.
\end{lemma}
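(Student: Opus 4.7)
The plan is to argue by contradiction. Assume that $\sigma^2(h)h = 1$, equivalently $\sigma^2(h) = h^{-1}$. Applying $\sigma^2$ to this identity yields $\sigma^4(h) = \sigma^2(h^{-1}) = (\sigma^2(h))^{-1} = h$, so $h$ lies in the fixed field of $\sigma^2$ iterated twice, namely $\mathbb{F}_{q^{\gcd(4,n)}} = \mathbb{F}_{q^{\gcd(4,2t)}}$. From here I would combine this information with the standing hypothesis $\sigma^t(h)h = -1$ and split into cases according to the residue of $t$ modulo $4$.

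First I would handle the case $t$ odd. Then $\gcd(4,2t) = 2$, so $h \in \mathbb{F}_{q^2}$ and hence $\sigma^2(h) = h$. Together with $\sigma^2(h)=h^{-1}$ this forces $h^2 = 1$, that is $h = \pm 1$. But then $h \in \mathbb{F}_q \subseteq \mathbb{F}_{q^t}$, so $\sigma^t(h) = h$ and $\sigma^t(h)h = h^2 = 1 \ne -1$, a contradiction.

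Next I would handle the case $t$ even, writing $\sigma^t = (\sigma^2)^{t/2}$ and using the identity $\sigma^2(h) = h^{-1}$ inductively. If $t \equiv 2 \pmod 4$ then $t/2$ is odd, so $(\sigma^2)^{t/2}(h) = h^{-1}$, giving $\sigma^t(h)h = 1 \ne -1$, contradiction. If $t \equiv 0 \pmod 4$ then $t/2$ is even, so $\sigma^t(h) = h$, giving $h^2 = -1$; this forces $h$ to be a root of the $\mathbb{F}_q$-polynomial $x^2+1$, hence $h \in \mathbb{F}_{q^2}$, so $\sigma^2(h) = h$. Combined with $\sigma^2(h) = h^{-1}$ this yields $h^2 = 1$, contradicting $h^2 = -1$ since $q$ is odd.

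The main (mild) obstacle is not a calculation but a careful bookkeeping of which subgroup of $\Gal(\mathbb{F}_{q^n}/\mathbb{F}_q)$ the element $h$ is fixed by, together with tracking signs when iterating $\sigma^2(h) = h^{-1}$; once the fixed-field computation $\mathrm{Fix}(\sigma^k) = \mathbb{F}_{q^{\gcd(k,n)}}$ is in place the rest is a short case analysis.
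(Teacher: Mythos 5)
Your proof is correct and follows essentially the same route as the paper's: both iterate the relation $\sigma^2(h)=h^{-1}$ to get $\sigma^4(h)=h$, rule out $t$ odd by forcing $h=\pm1$, and then split on $t\bmod 4$, reaching the contradictions $\sigma^t(h)h=1$ (for $t\equiv 2$) and $h^2=-1$ together with $h^2=1$ (for $t\equiv 0$). The only cosmetic difference is that you phrase the consequence of $\sigma^4(h)=h$ via the fixed field $\F_{q^{\gcd(4,2t)}}$, while the paper argues directly from $\sigma^n(h)=h$; the substance is identical.
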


\begin{proof}
 Since $q$ is odd, then necessarily $h \neq \pm 1$. Assume now by contradiction that $\sigma^2(h)h=1$, or equivalently $\sigma^2(h)=h^{-1}$, and $\sigma^t(h)h=-1$.
 Then
 $$h=\sigma^n(h)=\underbrace{\sigma^2\circ\ldots\circ\sigma^2}_{t \mbox{ \tiny{times} }}(h)=\begin{cases}
 h & \mbox{ if } t \mbox{ is even,}\\
 h^{-1} & \mbox{ if } t \mbox{ is odd,}
 \end{cases}$$
 which implies that $t$ is even. Moreover, we have $\sigma^4(h)=h$ and let $t\equiv t' \pmod{4}$ with $t' \in\{0,2\}$, hence
$$-1=\sigma^t(h)h=( \sigma^{t'}\circ(\sigma^4\circ\ldots\circ\sigma^4))(h) h=\sigma^{t'}(h)h=\begin{cases}
 \sigma^2(h)h & \mbox{ if } t \equiv 2 \, \pmod{4},\\
 h^2 & \mbox{ if } t \equiv 0 \, \pmod{4}.
 \end{cases}$$
 Since $q$ is odd, $1\neq -1$ and we get a contradiction if $t\equiv 2\, \pmod{4}$. If instead $t \equiv 0 \, \pmod{4}$, we have $h^2=-1$, and therefore $h \in \F_{q^2}$, which in turn implies $\sigma^2(h)=h$. Thus, we have simultaneously $h^2=1$ and $h^2=-1$, which yields a contradiction.
\end{proof}

\subsection{Inequivalence with Gabidulin and twisted Gabidulin codes}

Here we compare the construction of the family $\cC_{h,t,\sigma}$ with the two most prominent family of $\fqn$-linear MRD codes, namely Gabidulin  and twisted Gabidulin codes. 

\begin{proposition}\label{prop:not_Gabidulin}
 Let $\cG_{2,\theta}$ be a $2$-dimensional Gabidulin code. Then $\cG_{2,\theta}$ is not equivalent to $\cC_{h,t,\sigma}$, for any  $h$ such that $\sigma^t(h)h=-1$ and any generators $\sigma,\theta$ of $\Gal(\fqn/\fq)$. 
\end{proposition}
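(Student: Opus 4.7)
The plan is to use the equivalence invariants $s_i^\rho$ from Proposition~\ref{prop:equivalence_sum}, specialised to $\rho = \theta$. Equation~\eqref{eq:gabidulin_sequence} gives $s_1^\theta(\cG_{2,\theta}) = 3$, so it suffices to prove that $s_1^\theta(\cC_{h,t,\sigma}) = 4$ for every generator $\theta$ of $\Gal(\fqn/\fq)$; Proposition~\ref{prop:equivalence_sum} then forbids the equivalence.

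Write $\theta = \sigma^a$ with $\gcd(a,n)=1$; in particular $a \not\equiv 0, t \pmod n$, the second because $\gcd(t, 2t) = t > 1$. The sum $\cC_{h,t,\sigma} + \theta(\cC_{h,t,\sigma})$ is $\fqn$-spanned by the four polynomials
\[ g_1 = x,\quad g_2 = x^{\sigma^a},\quad g_3 = \psi_{h,t,\sigma},\quad g_4 = x^{\sigma^a} \circ \psi_{h,t,\sigma}, \]
so $s_1^\theta(\cC_{h,t,\sigma}) \leq 4$. To prove equality, I would show these four polynomials are $\fqn$-linearly independent. Setting $R := \{1, t-1, t+1, n-1\}$, a direct computation gives $\supp_\sigma(g_3) = R$ and $\supp_\sigma(g_4) = a + R$, with all corresponding coefficients nonzero (since $h \neq 0$). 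Given a relation $\alpha g_1 + \beta g_2 + \gamma g_3 + \delta g_4 = 0$, at any exponent $j \in R \setminus (\{0,a\} \cup (a+R))$ only $\gamma g_3$ contributes, forcing $\gamma = 0$; once $\gamma = 0$, since $|a+R| = 4 > 2 \geq |\{0,a\}|$ some coordinate of $g_4$ lies outside $\{0,a\}$, forcing $\delta = 0$; finally $\alpha$ and $\beta$ vanish from the coefficients at $j=0$ and $j=a$.

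The main (and really only) technical obstacle is the combinatorial claim: for every $a \in \ZZ{n} \setminus \{0, t\}$, $R \not\subseteq \{0,a\} \cup (a+R)$. Since $|R| = 4$ and $0 \notin R$, such an inclusion would force $|R \cap (a+R)| \geq 3$, and $|R \cap (a+R)|$ is precisely the multiplicity of $a$ in the multiset $R - R$ of pairwise differences. A direct enumeration yields, modulo $n$, multiplicity $4$ for both $0$ and $t$, and multiplicity $2$ for each of $\pm 2$ and $\pm(t-2)$; every other element of $\ZZ{n}$ has multiplicity zero. For $t \geq 5$ these six values $0, \pm 2, \pm(t-2), t$ are pairwise distinct mod $n$, so the only $a$ with multiplicity $\geq 3$ are $a \in \{0, t\}$, both excluded. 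This is exactly where the assumption $t \geq 5$ enters: for smaller $t$ some of these shifts coincide and $R$ acquires additional symmetries.
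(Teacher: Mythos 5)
Your proof is correct and follows essentially the same route as the paper: both arguments invoke Proposition \ref{prop:equivalence_sum} together with \eqref{eq:gabidulin_sequence} to reduce the claim to showing $\dim_{\fqn}\bigl(\cC_{h,t,\sigma}+\theta(\cC_{h,t,\sigma})\bigr)=4$, and both establish this by a $\sigma$-support containment argument on the same four generators. The only difference is bookkeeping: the paper tests $\supp_\sigma(x^{\sigma^s}\circ\psi_{h,t,\sigma})\subseteq\{0,s\}\cup R$ and dismisses it by appealing to $\gcd(s,2t)=1$, whereas you organize the final combinatorial check through the multiplicities in the difference multiset $R-R$, which has the minor virtue of making explicit where the standing assumption $t\ge 5$ enters.
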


\begin{proof}
 Since both $\sigma$ and $\theta$ generate $\Gal(\fqn/\fq)$, then we can write $\theta=\sigma^s$ for some $s$ coprime with $n$. Moreover, $\cG_{2,\theta}$ is equivalent to $\cG_{2,\theta^{-1}}$, hence, without loss of generality we can suppose $0<s<t$. Suppose that $\cG_{2,\sigma^s}$ and $\cC_{h,t,\sigma}$ are equivalent, then by Proposition \ref{prop:equivalence_sum} and \eqref{eq:gabidulin_sequence} we must have
 $s_1^{\sigma^s}(\cC_{h,t,\sigma})=3$, that is $\dim_{\fqn}(\cD)=3$, where
  $\cD:=\cC_{h,t,\sigma}+\sigma^{s}(\cC_{h,t,\sigma})$. We have
  $$\cD=\langle x, x^{\sigma^s}, \psi_{h,t,\sigma}(x), x^{\sigma^s}\circ \psi_{h,t,\sigma}(x) \rangle_{\fqn}.$$
  Notice that the first three generators are clearly linearly indpendent, since the first two have disjoint $\sigma$-supports and the $\sigma$-support of $\psi_{h,t,\sigma}(x)$ is not contained in $\{0,s\}$. Thus, we must have $x^{\sigma^s}\circ \psi_{h,t,\sigma}(x)\in \langle x,x^{\sigma^s},\psi_{h,t,\sigma}(x)\rangle_{\fqn}$, that implies 
  $$\supp_\sigma(x^{\sigma^s}\circ \psi_{h,t,\sigma}(x))=\{s+1,t+s-1,t+s+1,s-1\}\subseteq \{0,1,s,t-1,t+1,2t-1\}.$$
  However, due to the restriction on $s$ being coprime with $2t$, this can never happen.
\end{proof}

\begin{theorem}\label{th:no_twisted}
  Let $\cH_{\eta,\theta}:=\cH_{2,\theta}(\eta,0)$ be a $2$-dimensional $\fqn$-linear twisted Gabidulin code, and let $t\geq 5$. Then $\cH_{\eta,\theta}$ is not equivalent to $\cC_{h,t,\sigma}$, for any admissible choice of $h$, $\eta$ and any generators $\sigma, \theta$ of $\Gal(\fqn/\fq)$. 
\end{theorem}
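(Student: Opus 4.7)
My plan follows the template of Proposition \ref{prop:not_Gabidulin}: use the invariance of the sequence $\{s_i^\theta\}$ (Proposition \ref{prop:equivalence_sum}) together with the explicit formula \eqref{eq:twisted_sequence} for twisted Gabidulin codes to separate the two classes of codes. Writing $\theta = \sigma^s$ with $\gcd(s,n) = 1$, formula \eqref{eq:twisted_sequence} with $k = 2$ gives $s_2^\theta(\cH_{\eta,\theta}) = 5$. Therefore it is enough to prove
\[
s_2^\theta(\cC_{h,t,\sigma}) = 6,
\]
equivalently, that the six $\sigma$-polynomials $x,\ x^{\sigma^s},\ x^{\sigma^{2s}}$ and $f_j := x^{\sigma^{js}} \circ \psi_{h,t,\sigma}$ for $j=0,1,2$ are $\fqn$-linearly independent. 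Since $\gcd(s, 2t) = 1$, the three monomials have pairwise distinct singleton $\sigma$-supports $T := \{0, s, 2s\}$ and are clearly independent, so the problem reduces to showing that the images $\bar f_0, \bar f_1, \bar f_2$ modulo $\langle x, x^{\sigma^s}, x^{\sigma^{2s}}\rangle_{\fqn}$ are $\fqn$-linearly independent.

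The polynomial $f_j$ has $\sigma$-support $S_j := js + A \pmod{2t}$ with $A := \{1, t-1, t+1, 2t-1\}$, and $\bar f_j$ is supported on $E_j := S_j \setminus T$. A short computation yields $A - A = \{0, \pm 2, t, t \pm 2\} \pmod{2t}$, so $S_j \cap S_{j'} \neq \emptyset$ precisely when $(j' - j)s \in A - A$. My approach is to exhibit, for each $j \in \{0, 1, 2\}$, a ``witness'' index $i_j \in E_j \setminus (S_{j'} \cup S_{j''})$, where $\{j, j', j''\} = \{0, 1, 2\}$. At such an index only $\bar f_j$ contributes, and its coefficient there is the $\sigma^{js}$-image of one of $1,\, h\sigma(h),\, h\sigma^{-1}(h^{-1})$, which is nonzero since $h \neq 0$. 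Thus any hypothetical relation $\beta_0 \bar f_0 + \beta_1 \bar f_1 + \beta_2 \bar f_2 = 0$ forces $\beta_j = 0$.

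The principal obstacle is the case analysis needed to guarantee these witnesses. The parity condition $\gcd(s, 2t) = 1$ forces $s$ odd, which immediately kills the even elements $0, \pm 2$ of $A - A$; only the ``critical'' residues $s \equiv t \pm 2 \pmod{2t}$ (producing overlaps between adjacent $S_j$'s) and $s \equiv \pm 1$ or $s \equiv t \pm 1 \pmod{2t}$ (producing overlaps $S_0 \cap S_2$ via $2s \equiv \pm 2$) require individual treatment. In each such case the overlaps $|S_j \cap S_{j'}|$ are bounded by $2$, as one sees by counting translates of $A$ inside itself, and the hypothesis $t \geq 5$ guarantees that at least two of the three $E_j$'s still admit a witness; the remaining coefficient is then forced to vanish because the corresponding $\bar f_j$ is itself nonzero. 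Should a truly degenerate configuration occur where two witnesses simultaneously fail, one can fall back on comparing the explicit coefficients of the $f_j$'s, which are products of $\sigma$-powers of $h$: combining these with the constraint $\N_{q^n/q^t}(h) = -1$ (and invoking Lemma \ref{lem:auxiliary} if needed) yields a contradictory equation in powers of $h$ that rules out any remaining nontrivial relation.
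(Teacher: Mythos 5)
Your strategy is the same as the paper's: invoke Proposition \ref{prop:equivalence_sum} together with $s_2^{\theta}(\cH_{\eta,\theta})=5$ from \eqref{eq:twisted_sequence}, and show by a $\sigma$-support analysis that $s_2^{\sigma^s}(\cC_{h,t,\sigma})=6$. (The paper organizes the latter as a two-step exclusion --- first $x^{\sigma^{2s}}\notin\cD$, then $x^{\sigma^{2s}}\circ\psi_{h,t,\sigma}\notin\cD'$ --- whereas you argue the independence of all six generators at once; these are equivalent.) The problem is that the case analysis, which is the entire substance of the proof, is left incomplete in two concrete places.

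First, your enumeration of the critical residues is not exhaustive: for the overlap $S_0\cap S_2\neq\emptyset$ you only list $2s\equiv\pm 2\pmod{2t}$, but $2s\equiv t\pm 2\pmod{2t}$ also lies in $A-A$ and is realized by admissible $s$ whenever $t\equiv 0\pmod 4$ (e.g.\ $t=8$, $s=3$ gives $2s=t-2$); the witness argument does go through there, but the case must be treated. Second, and more importantly, the claim that for $t\ge 5$ ``at least two of the three $E_j$'s still admit a witness'' is precisely the combinatorial heart of the argument and is asserted without proof; a witness can indeed fail for one index (for $t=5$, $s=3$ one has $E_1=\{2,4,7,9\}\subseteq S_0\cup S_2$), and your fallback for a hypothetical configuration in which two witnesses fail --- ``comparing the explicit coefficients'' and ``invoking Lemma \ref{lem:auxiliary} if needed'' --- is a gesture rather than an argument. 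To close the proof you must check, for each of the finitely many critical residue classes of $s$ modulo $2t$, that at most one witness fails (so that the remaining coefficient is killed by $\bar f_j\neq 0$); this is exactly the case-by-case verification that occupies the paper's proof.
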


\begin{proof}
 Since both $\sigma$ and $\theta$ generate $\Gal(\fqn/\fq)$, then we can write $\theta=\sigma^s$ for some $s$ coprime with $n$. Moreover, $\cH_{\eta,\theta}$ is equivalent to $\cH_{\eta',\theta^{-1}}$, hence, without loss of generality we can suppose $0<s<t$. So, from now on, we are assuming
 \begin{equation}\label{eq:conditions_on_s}
     1\leq s \leq t-1, \qquad \gcd(s,2t)=1
 \end{equation}
 Suppose that $\cH_{\eta,\sigma^s}$ and $\cC_{h,t,\sigma}$ are equivalent, then by Proposition \ref{prop:equivalence_sum} and \eqref{eq:twisted_sequence} we must have 
 $s_1^{\sigma^s}(\cC_{h,t,\sigma})=4$ and $s_2^{\sigma^s}(\cC_{h,t,\sigma})=5$. The first assertion has to be true. In fact, if  $s_1^{\sigma^s}(\cC_{h,t,\sigma})=2$, then $\cC_{h,t,\sigma}$ has rank one elements and cannot be MRD. Moreover, if  $s_1^{\sigma^s}(\cC_{h,t,\sigma})=3$, since $\cC_{h,t,\sigma}$ is MRD, then it must be a Gabidulin code (see e.g. \cite[Theorem 4.8]{horlemann2017new}). However,  by Proposition \ref{prop:not_Gabidulin}, this is not possible. Hence, we deduce that $s_1^{\sigma^s}(\cC_{h,t,\sigma})=4$. Assume now that $s_2^{\sigma^s}(\cC_{h,t,\sigma})=5$, and define
 $$\cD:=\cC_{h,t,\sigma}+\sigma^{s}(\cC_{h,t,\sigma}).$$
 
 We first show that $x^{\sigma^{2s}} \notin \cD$. Assume on the contrary that $x^{\sigma^{2s}} \in \cD$. This implies that $2s\in\cS_{\sigma}(\cD)=\{0,1,s-1,s,s+1,t-1,t+1,t+s-1,t+s+1,-1\}$. Recalling our assumptions \eqref{eq:conditions_on_s} on $s$, it is easy to observe that we can only have
 $$2s\in \{s+1,t-1,t+1,t+s-1\}.$$
   
  \medskip
  
  \noindent\underline{\textbf{Case I:} $2s=s+1$}. If this happens then $s=1$. This translates in 
 $$x^{\sigma^2}\in \langle x, x^{\sigma}, x^{\sigma^{t-1}}+h\sigma(h)x^{\sigma^{t+1}}+h\sigma^{-1}(h^{-1})x^{\sigma^{2t-1}}, x^{\sigma^2}+x^{\sigma^{t}}+\sigma(h)\sigma^2(h)x^{\sigma^{t+2}}\rangle_{\fqn}=\cD.$$
 It is easy to see that this is not possible. Indeed, the only generator of $\cD$ whose $\sigma$-support contains $2$ is $f(x)=x^{\sigma^2}+x^{\sigma^{t}}+\sigma(h)\sigma^2(h)x^{\sigma^{t+2}}$, and its $\sigma$-support is disjoint from the $\sigma$-supports of the other generators.
   
  \medskip
  
  \noindent\underline{\textbf{Case II:} $2s=t-1$}. Then
  $$x^{\sigma^{2s}}\in \langle x, x^{\sigma^{s}}, x^\sigma+x^{\sigma^{2s}}+h\sigma^{-1}(h^{-1})x^{\sigma^{2s+1}}, x^{\sigma^{s+1}}+x^{\sigma^{s-2}}+\sigma^s(h)\sigma^{s-1}(h^{-1})x^{\sigma^{s-1}}\rangle_{\fqn}=\cD.$$
 The only generator of $\cD$ whose $\sigma$-support contains $2s$ is $f(x)=x^\sigma+x^{\sigma^{2s}}+h\sigma^{-1}(h^{-1})x^{\sigma^{2s+1}}$, and its $\sigma$-support is disjoint from the $\sigma$-supports of the other generators, due to the conditions \eqref{eq:conditions_on_s} that imply $s\geq 3$. 
   
  \medskip
  
  \noindent \underline{\textbf{Case III and IV:}}  Analogously one can exclude these two cases. 
  \medskip
  
  \noindent At this point, since we are assuming $s_2^{\sigma^s}(\cC_{h,t,\sigma})=5$, we must have
  $$g(x):=x^{\sigma^{2s+1}}+x^{\sigma^{2s+t-1}}+\sigma^{2s}(h)\sigma^{2s+1}(h)x^{\sigma^{2s+t+1}}+\sigma^{2s}(h)\sigma^{2s-1}(h^{-1})x^{\sigma^{2s-1}} \in \cD',$$
  where $\cD':=\cD+\langle x^{\sigma^{2s}}\rangle_{\fqn}$.
  This is equivalent to say $\supp_{\sigma}(g)\subseteq \cS_\sigma(\cD')$, i.e.
  \begin{equation}\label{eq:containment}\{2s-1,2s+1,2s+t-1,2s+t+1\} \subseteq \{0,1,-1,t-1,t+1,s,s-1,s+1,s+t-1,s+t+1,2s\}
  \end{equation}
  and in particular $2s+1\in \cS_\sigma(\cD)$.
  One can observe that, due to the conditions \eqref{eq:conditions_on_s}, we can only have
  $$ 2s+1 \in \{-1,t-1,t+s-1\}.$$
  \medskip
  
  \noindent \underline{\textbf{Case I:} $2s+1=-1$}. This means $s=t-1$ and \eqref{eq:containment} becomes
  $$\{-3,-1,t-3,t-1\}\subseteq \{0,1,-1,t-2,t-1,t,t+1,-2\}.$$
  However, since $t\geq 5$, this is not possible. 
  
  \medskip
  
  \noindent \underline{\textbf{Case II:} $2s+1=t-1$}. In this case \eqref{eq:containment} can be written as
  $$\{-3,-1,2s-1,2s+1\}\subseteq\{0,1,-1,s,s-1,s+1,2s,2s+1,2s+3,3s+1,3s+3 \} $$
  Since $t\geq 5$ and $2s+1=t-1$, and because of \eqref{eq:conditions_on_s}, we have $s\geq 3$ and $t\geq 8$. With this range of parameters it is immediate to see that $-3 \notin \cS_\sigma(\cD')$, so also this case is not possible. 
  
  \medskip
  
  \noindent\underline{\textbf{Case III:} $2s+1=t+s-1$}. This means $s=t-1$ and \eqref{eq:containment} becomes
  $$\{-5,-3,t-5,t-3\}\subseteq \{0,1,-1,t-3,t-2,t-1,t+1,-4,-3,\}.$$
  Since we are assuming $t\geq 5$, it is easy to see that $-5$ and $t-5$ do no belong to $\cS_\sigma(\cD')$. 
  \medskip
  
 \noindent This finally shows that $s_2^{\sigma^s}(\cC_{h,t,\sigma})=6$, and thus $\cC_{h,t,\sigma}$ and $\cH_{\eta,\sigma^s}$ cannot be equivalent due to Proposition \ref{prop:equivalence_sum}.
\end{proof}
 \begin{remark}
  Note that neither the equivalence of a code $\cC_{h,t,\sigma}$ with any other twisted Gabidulin code $\cH_{2,\theta}(\eta,h)$ is possible. Indeed, as discussed in Section \ref{sec:gab_and_tGab}, if $h \neq 0$   then the left idealizer of $\cH_{2,\theta}(\eta,h)$ is isomorphic to a finite field of cardinality $q^{\gcd(h,n)}$. Therefore, $\cH_{2,\theta}(\eta,h)$ is not an $\fqn$-linear code, while $\cC_{h,t,\sigma}$ is $\fqn$-linear and has left idealizer isomorphic to $\fqn$; by  \cite[Proposition 4.1]{lunardon2018nuclei} these codes cannot be equivalent. 
 \end{remark}

\subsection{Equivalence among themselves}

In this section we investigate the equivalence issue for codes within the family introduced in Theorem \ref{th:newfamily}.
We start with the following remark.
\begin{remark}\label{rem:h}
Let $\rho \in \mathrm{Aut}(\fqn)$. One can easily check that $\mathrm{N}_{q^n/q^t}(\rho(h))=\rho(\mathrm{N}_{q^n/q^t}(h))=\rho(-1)=-1$ and $(\C_{h,t,\sigma})^\rho=\C_{\rho(h),t,\sigma}$. Hence we can immediately derive that
 $\C_{h,t,\sigma}$ and $\C_{\rho(h),t,\sigma}$ are  equivalent.  
\end{remark}

\begin{theorem}\label{th:equivspecialcase}
Let $t\geq 5$ and consider $\mathcal{C}_{h,t,\sigma}$ and $\C_{k,t,\sigma^s}$. Then the following hold:
\begin{itemize}
    \item[\textbf{I.}] If $s = 1$, then $\mathcal{C}_{h,t,\sigma}$ and $\C_{k,t,\sigma^s}$ are equivalent if and only if there exists $\rho \in \mathrm{Aut}(\fqn)$ such that
        \[ \rho(h)=\left\{ \begin{array}{ll} \pm k, & \text{if}\,\, t \not\equiv 2 \pmod{4},\\ l k,\,\,\text{where}\,l^{q^2+1}=1, & \text{if}\,\, t \equiv 2 \pmod{4}. \end{array} \right. \]
    \item[\textbf{II.}] If $s = -1$, $\mathcal{C}_{h,t,\sigma}$ and $\C_{k,t,\sigma^s}$ are equivalent if and only if there exists $\rho \in \mathrm{Aut}(\fqn)$
    \[ \rho(h)=\left\{ \begin{array}{ll} \pm k^{-1}, & \text{if}\,\, t \not\equiv 2 \pmod{4},\\ l k^{-1},\,\text{where}\,l^{q^2+1}=1, & \text{if}\,\, t \equiv 2 \pmod{4}. \end{array} \right. \]
    \item[\textbf{III.}] If $s = t-1$, with $t$ even, then $\mathcal{C}_{h,t,\sigma}$ and $\C_{k,t,\sigma^s}$ are equivalent if and only if there exists $\rho \in \mathrm{Aut}(\fqn)$ such that
    \[ \rho(h)=\left\{ \begin{array}{ll} \pm k, & \text{if}\,\, t \not\equiv 2 \pmod{4},\\ l k,\,\text{where}\,l^{q^2+1}=1, & \text{if}\,\, t \equiv 2 \pmod{4}. \end{array} \right. \]
    \item[\textbf{IV.}] If $s = t+1$, with $t$ even, then $\mathcal{C}_{h,t,\sigma}$ and $\C_{k,t,\sigma^s}$ are equivalent if and only if there exists $\rho \in \mathrm{Aut}(\fqn)$ such that
    \[ \rho(h)=\left\{ \begin{array}{ll} \pm k^{-1}, & \text{if}\,\, t \not\equiv 2 \pmod{4},\\ l k^{-1},\,\text{where}\,l^{q^2+1}=1, & \text{if}\,\, t \equiv 2 \pmod{4}. \end{array} \right. \]
\end{itemize}
\end{theorem}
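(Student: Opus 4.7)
The plan is to exploit $\fqn$-linearity to normalize the equivalence and then perform a careful support-based analysis. Both $\mathcal{C}_{h,t,\sigma}$ and $\mathcal{C}_{k,t,\sigma^s}$ are $\fqn$-linear with left idealizer $\fqn$, so any equivalence $\mathcal{C}_{h,t,\sigma}=f_1\circ\mathcal{C}_{k,t,\sigma^s}^{\rho}\circ f_2$ can, after absorbing a scalar factor of $f_1$ into the code, be normalized so that $f_1(x)=x$. Applying both sides to the element $x\in\mathcal{C}_{k,t,\sigma^s}^\rho$ then forces $f_2\in\mathcal{C}_{h,t,\sigma}$, so that $f_2(x)=\alpha x+\beta\psi_{h,t,\sigma}(x)$ for some $\alpha,\beta\in\fqn$ with $f_2$ invertible. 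The whole proof thus reduces to understanding when $\psi_{k,t,\sigma^s}^{\rho}\circ f_2\in\mathcal{C}_{h,t,\sigma}$ for some such $f_2$.

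The central observation for the necessity direction is that in all four cases the map $i\mapsto si\pmod{2t}$ permutes the set $S:=\{1,t-1,t+1,2t-1\}=\supp_{\sigma}(\psi_{h,t,\sigma})$: this is immediate for $s=\pm 1$, and for $s=t\pm 1$ it follows from $t$ being even (since then $t^2\equiv 0\pmod{2t}$). Consequently $\supp_{\sigma}(\psi_{k,t,\sigma^s}^\rho)=S$ and $\cS_{\sigma}(\mathcal{C}_{h,t,\sigma})=\{0\}\cup S$. Expanding $\psi_{k,t,\sigma^s}^\rho\circ(\alpha x+\beta\psi_{h,t,\sigma})$ and requiring the coefficients at $\sigma$-exponents outside $\{0\}\cup S$ to vanish yields, under the hypothesis $t\ge 5$, a system that first forces $\alpha\beta=0$ and then reduces the equivalence to a single equation of the form $\mu\cdot\sigma^j(\mu)=1$ in the quantity $\mu=\rho(h)k^{\pm 1}$ for an appropriate $j\in\{1,2,t-1,t+1\}$. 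Solving this Frobenius-type equation inside $\fqn$, and using Lemma \ref{lem:auxiliary} to discard a degenerate sub-branch, produces $\mu=\pm 1$ when $t\not\equiv 2\pmod 4$ and $\mu^{q^2+1}=1$ when $t\equiv 2\pmod 4$, matching exactly the four lists in the statement.

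The sufficiency direction is handled by exhibiting explicit equivalences. The case $\rho(h)=k$ is immediate from Remark \ref{rem:h}. For $\rho(h)=\lambda k$ with $\lambda$ as prescribed, one checks that post-composition with a suitable $f_2(x)=\mu x$ converts $\psi_{h,t,\sigma}$ into a scalar multiple of $\psi_{\lambda h,t,\sigma}$; this reduces to a Hilbert 90-type solvability statement for $\mu$ in terms of $\lambda$, which holds precisely under the stated conditions. Cases II, III and IV are then deduced from Case I by combining it with standard symmetries of the family: the adjoint operation identifies $\mathcal{C}_{h,t,\sigma}^\top$ with a code of the form $\mathcal{C}_{h^{-1},t,\sigma^{-1}}$ (up to scalar adjustments), which handles Case II, while composition with the involution $x\mapsto x^{\sigma^t}$ swaps the pairs $\{1,t+1\}\leftrightarrow\{t-1,2t-1\}$ within $S$ and converts the generator $\sigma$ into $\sigma^{t\pm 1}$, producing Cases III and IV. The main obstacle is the support analysis yielding $\alpha\beta=0$: a priori the composition $\psi_{k,t,\sigma^s}^\rho\circ(\alpha x+\beta\psi_{h,t,\sigma})$ contains up to sixteen $\sigma$-monomials distributed over $\mathbb{Z}/2t\mathbb{Z}$, and one must enumerate precisely which exponent coincidences are compatible with $\cS_{\sigma}(\mathcal{C}_{h,t,\sigma})=\{0\}\cup S$. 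The assumption $t\ge 5$ is exactly what controls this combinatorial separation so that only the admissible coincidences listed in the four cases occur, and keeps the final Frobenius equation determined enough to extract the characterization.
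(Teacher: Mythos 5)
Your Case I skeleton does match the paper's: the normalization of $f_1$ via the left idealizer (so that one may take $f_1(x)=x^{\sigma^i}$ and then absorb $\sigma^i$ and $\rho$ using Remark \ref{rem:h}), the deduction that $f_2(x)=\alpha x+\beta\psi(x)$, and the sufficiency via an explicit scalar map $f_2(x)=ax$ solving a norm-type system are all exactly how the proof proceeds. But the necessity step is where your proposal is only an announcement of the answer: you assert that the vanishing of the coefficients outside $\{0\}\cup S$ ``first forces $\alpha\beta=0$ and then reduces the equivalence to a single equation $\mu\,\sigma^j(\mu)=1$''. That is the entire computational content of the case --- the paper outsources it to the proof of \cite[Theorem 4.2]{longobardi2021large}, which in fact splits into the two sub-cases $\beta=0$ and $\beta\neq0$ rather than excluding $\beta\neq0$ --- and nothing in your sketch verifies either the dichotomy or the precise shape of the final equation. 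Note also that the condition in the statement is $l^{q^2+1}=1$ with $l\in\F_{q^4}$, a norm condition over $\F_{q^2}$ expressed via the $q$-Frobenius; extracting that from an equation $\mu\,\sigma^j(\mu)=1$ for a general generator $\sigma$ requires first pinning $\mu$ into $\F_{q^4}$, a step you do not address.

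The more serious gap is in Cases III and IV. Conjugating the code by $x^{\sigma^t}$ merely applies $\sigma^t$ to the coefficients and returns $\C_{\sigma^t(h),t,\sigma}=\C_{-h^{-1},t,\sigma}$, which is already covered by Remark \ref{rem:h}; composing on one side only gives $x^{\sigma^t}\circ\psi_{h,t,\sigma}=(h\sigma(h))^{-1}x^{\sigma}+h^{-1}\sigma^{-1}(h)x^{\sigma^{t-1}}+x^{\sigma^{t+1}}+x^{\sigma^{2t-1}}$, which after rescaling is not of the form $\psi_{k,t,\sigma^{t\pm1}}$ (for instance the coefficient of $x^{\sigma^{t-1}}$ becomes $\sigma(h)\sigma^{-1}(h)\neq1$ in general). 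So this ``involution'' does not convert $\sigma$ into $\sigma^{t\pm1}$, and Cases III--IV do not follow from Case I by a formal symmetry: in the paper they require constructing genuine equivalences $\C_{k,t,\sigma^{t-1}}\sim\C_{k,t,\sigma}$ and $\C_{k,t,\sigma^{t+1}}\sim\C_{k^{-1},t,\sigma}$ by exhibiting explicit $b,c$ (with different constructions according to $t\bmod 4$) and checking non-degeneracy through Lemma \ref{lem:auxiliary}. Similarly, your adjoint route for Case II presupposes that $\C^\top$ is equivalent to $\C$; the adjoint is not part of the equivalence relation as defined, and this fact is proved in the paper only afterwards, by another explicit computation of the same difficulty as the direct equivalence $\C_{k,t,\sigma^{-1}}\sim\C_{k^{-1},t,\sigma}$ that the paper constructs, so no work is saved (and one risks a circular reference).
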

\begin{proof}
\underline{\textbf{Case I:} $s=1$}.
Suppose that  $\mathcal{C}_{h,t,\sigma}$ and $\C_{k,t,\sigma}$ are equivalent, so there exists an isometry $\varphi$ such that $\varphi(\C_{h,t,\sigma})=\C_{k,t,\sigma}$.
So, $\varphi(p(x))=f_1 \circ p(x)^\rho \circ f_2(x)$, for $f_1(x)$ and $f_2(x)$ two invertible $\sigma$-polynomials in $\cL_{n,q}$ and $\rho \in \mathrm{Aut}(\fqn)$. 
By \cite[Proposition 3.8]{sheekeyVdV}, it follows that we may assume $f_1(x)=x^{\sigma^i}$ for some $i \in \{0,\ldots,n-1\}$. 
We may also assume $\rho$ to be the identity map and $i=0$, since by Remark \ref{rem:h} the code $\C_{h,t,\sigma}$ is equivalent to $\C_{\sigma^i(\rho(h)),t,\sigma}$.
Since $\varphi(x) \in \C_{k,t,\sigma}$ then $\varphi(x)=ax+b\psi_{k,t,\sigma}(x)$, which implies that $f_2(x)=ax+b\psi_{k,t,\sigma}(x)$.
Also, since $\varphi( \psi_{h,t,\sigma}(x)) \in \C_{k,t,\sigma}$ then $\varphi(\psi_{h,t,\sigma}(x))=cx+d\psi_{k,t,\sigma}(x)$.
In particular, one gets
\begin{equation}\label{eq:s=1prima} cx+d\psi_{k,t,\sigma}(x)=\psi_{h,t,\sigma}(ax)+\psi_{k,t,\sigma}(b\psi_{h,t,\sigma}(x)). \end{equation}

\noindent Arguing as in \cite[Case 1 and Case 2, Proof of Theorem 4.2]{longobardi2021large} one gets that $h=lk$, with $l=\pm 1$ if $t\not\equiv 2\pmod{4}$ and $l\in \F_{q^4}$ such that $l^{q^2+1}=1$ if $t\equiv 2\pmod{4}$.
Conversely, assume that $h=lk$, with $l=\pm 1$ if $t\not\equiv 2\pmod{4}$ and $l\in \F_{q^4}$ such that $l^{q^2+1}=1$ if $t\equiv 2\pmod{4}$.
If $t \not \equiv 2\pmod{4}$ then the two codes $\C_{h,t,\sigma}$ and $\C_{k,t,\sigma}$ coincide, and hence they are equivalent.
Now, suppose that $t \equiv 2\pmod{4}$.
Choose $a \in \F_{q^4}$ such that 
\[ \sigma^3(a)\sigma(a^{-1})=l\sigma(l), \]
and $d=\sigma(a)$.
As in \cite[Case 1, Proof of Theorem 4.2]{longobardi2021large}, one can show that
\begin{equation}\label{eq:systems=1}
\begin{cases}
d=\sigma(a)=\sigma^{t-1}(a),\\
dh\sigma(h)=\sigma^{t+1}(a)k\sigma(k),\\
dh\sigma^{-1}(h^{-1})=\sigma^{-1}(a)k\sigma^{-1}(k^{-1}),
\end{cases}
\end{equation}
so that
\[ d\psi_{h,t,\sigma}(x)=\psi_{k,t,\sigma}(ax). \]
Consider $\tau(x)=ax$, then
\[ \C_{k,t,\sigma}\circ \tau=\langle ax,d\psi_{h,t,\sigma}(x)\rangle_{\fqn}=\C_{h,t,\sigma}, \]
and the two codes $\C_{k,t,\sigma}$ and $\C_{h,t,\sigma}$ are equivalent. 

\medskip

\noindent \underline{\textbf{Case II:} $s=-1$}.
Our aim is first to prove that $\C_{k,t,\sigma^{-1}}$ is equivalent to $\C_{k^{-1},t,\sigma}$, and then we use the case $s=1$.
So, now we need find $b,c\in \fqn^*$ such that 
\begin{equation}\label{eq:s=-1map}
cx=\psi_{k,t,\sigma^{-1}}(b\psi_{k^{-1},t,\sigma}(x)),
\end{equation}
which is enough to prove that $\C_{k,t,\sigma^{-1}}$ is equivalent to $\C_{k^{-1},t,\sigma}$.
Indeed, suppose we have found such $b$ and $c$, then the map $\psi_{k,t,\sigma^{-1}}(x)$ is invertible, otherwise we would get a contradiction to \eqref{eq:s=-1map}, and $\psi_{k,t,\sigma^{-1}}(\C_{k^{-1},t,\sigma})=\langle \psi_{k,t,\sigma^{-1}}(x),cx\rangle_{\fqn}= \C_{k,t,\sigma^{-1}}$.
Equation \eqref{eq:s=-1map} can be written as follows:
\begin{align*}
    cx= &\, (\sigma(k)k\sigma^{t+1}(b)+\sigma(k^{-1})\sigma^2(k)\sigma(b))x^{\sigma^{2}} \\
 & +(\sigma^{t-1}(b)-\sigma^{-1}(k^{-2})k\sigma^{t-2}(k)\sigma^{-1}(b))x^{\sigma^{t-2}}  \\
& +(\sigma(k)k \sigma^{t+1}(b)+ \sigma^{-1}(k^{-1})k\sigma^{-1}(b)+\sigma(k^{-1})k^{-1}\sigma(b)-\sigma^{t-1}(k^{-1})k^{-1}\sigma^{t-1}(b))x^{\sigma^{t}}  \\
& +(\sigma^{-1}(k^{-1})k\sigma^{-1}(b)-\sigma^{t-1}(k^{-1})\sigma^{-2}(k)\sigma^{t-1}(b))x^{\sigma^{2t-2}} \\
& +(\sigma(b)+\sigma^{t-1}(b)+\sigma(k^2)k^2\sigma^{t+1}(b)+\sigma^{-1}(k^{-2})k^2 \sigma^{-1}(b)) x \\
& +(\sigma(b)+\sigma^2(k)k\sigma^2(k^{-1})\sigma^{t+1}(b))x^{\sigma^{t+2}}.
\end{align*}

Now, we choose
\[ b=\frac{1}{k^{-1}\sigma^{-1}(k^{-1})-k^{-1}\sigma(k)}. \]
and we show that  the following conditions hold:
\begin{equation}\label{eq:systemxxxx} 
\begin{cases}
\sigma(k)k\sigma^{t+1}(b)+\sigma(k^{-1})\sigma^2(k)\sigma(b)&=0,\\
\sigma^{t-1}(b)+\sigma^{-1}(k^{-2})k\sigma^{-2}(k^{-1})\sigma^{-1}(b)&=0,\\
\sigma(k)k \sigma^{t+1}(b)+ \sigma^{-1}(k^{-1})k\sigma^{-1}(b)+\sigma(k^{-1})k^{-1}\sigma(b)+\sigma^{-1}(k)k^{-1}\sigma^{t-1}(b)&=0,\\
c=\sigma(b)+\sigma^{t-1}(b)+\sigma^2(k)k^2\sigma^{t+1}(b)+\sigma^{-1}(k^{-2})k^2 \sigma^{-1}(b)&\ne 0.
\end{cases}
\end{equation}
The first two equations give the same condition, which corresponds to
\begin{equation} \label{eq:s=-s'bqs+1}
\sigma^{t+1}(b)=-k^{-1}\sigma(k^{-2})\sigma^2(k)\sigma(b).
\end{equation}
By applying $\sigma^{t-2}$ to \eqref{eq:s=-s'bqs+1}, we have
\begin{equation} \label{eq:s=-s'bqs+12}
\sigma^{-1}(b)=\sigma^{-2}(k)\sigma^{-1}(k)k^{-1}\sigma^{t-1}(b).
\end{equation}

By replacing \eqref{eq:s=-s'bqs+1} and \eqref{eq:s=-s'bqs+12} in the third equation of  \eqref{eq:systemxxxx} and after some manipulations,  we obtain
\begin{equation}\label{eq:s=-s'intermediate}
( \sigma(k^{-1})k^{-1} -\sigma(k^{-1})\sigma^2(k))\sigma(b)
 +( \sigma^{-1}(k)k^{-1}-\sigma^{-2}(k)\sigma^{-1}(k) )\sigma^{t-1}(b)=0.
\end{equation}
Applying $\sigma^2$ to \eqref{eq:s=-s'intermediate}, we derive
\begin{align}\label{eq:s=-s'condsub}
( \sigma^2(k^{-1})\sigma^3(k^{-1}) -\sigma^3(k^{-1})\sigma^4(k))\sigma^3(b) 
 +( \sigma(k)\sigma^2(k^{-1})-k\sigma(k) )\sigma^{t+1}(b)&=0,
\end{align}
and using \eqref{eq:s=-s'bqs+1}, we obtain 
\[
( \sigma^2(k^{-1})\sigma^3(k^{-1}) -\sigma^4(k^{-1})\sigma^3(k))\sigma^3(b) 
 -(-\sigma(k^{-1})\sigma^{2}(k)
+ k^{-1}\sigma(k^{-1})
 )\sigma(b)=0,
\]
which is satisfied because of the choice of $b$.

\noindent Now we show that $c$ cannot be zero.
By contradiction, let us assume that
\[
\sigma(b)+\sigma^{t-1}(b)+\sigma^{t+1}(b)+\sigma^{-1}(k^{-2})k^2 \sigma^{-1}(b)=0.
\]
Since $b$ satisfies \eqref{eq:s=-s'bqs+1} and \eqref{eq:s=-s'bqs+12}, we get 
\begin{equation} \label{eq:s=-s'condfinal}
\sigma^{3}(b)(1-\sigma^{2}(k)\sigma^{4}(k)) =- \sigma^{t+1}(b)(1-\sigma^{2}(k)k)=0,
\end{equation}
and by \eqref{eq:s=-s'condsub}
\[
(\sigma^3(k^{-1})\sigma^4(k)
-\sigma^3(k^{-1})\sigma^2(k^{-1}))
\sigma^3(b)=
 (-\sigma^{2}(k)k
+1)\sigma^2(k^{-1})\sigma(k)\sigma^{t+1}(b)
\]
Now, substituting the above equation in \eqref{eq:s=-s'condfinal} we obtain 
\[
\sigma^2(k^{-1})\sigma(k)\sigma^{3}(b)(1-\sigma^{2}(k)\sigma^{4}(k)) =- (-1+\sigma^4(k)\sigma^2(k)
)\sigma^3(k^{-1})\sigma^2(k^{-1})
\sigma^3(b),
\]
that is
\[
\sigma^3(b)(-1+\sigma^4(k)\sigma^2(k)
)(\sigma^2(k^{-1})\sigma(k)-\sigma^3(k^{-1})\sigma^2(k^{-1}))=0.
\]
However, this yields to a contradiction, since $\sigma^2(k)k \neq 1$ because of Lemma \ref{lem:auxiliary}.

\noindent At this point, we use the fact that we are dealing with an equivalence relation. Thus,  $\C_{h,t,\sigma}$ is equivalent to $\C_{k,t,\sigma^{-1}}$ if and only if $\C_{h,t,\sigma}$ is equivalent to $\C_{k^{-1},t,\sigma}$.  The assertion follows by Case I, by substituting $k$ with $k^{-1}$. 
\medskip

\noindent \underline{\textbf{Case III:} $s=t-1$}.
The first step is to prove that $\C_{k,t,\sigma^{t-1}}$ is equivalent to $\C_{k,t,\sigma}$, and then we use the case $s=1$.
We divide the discussion in two cases.
Assume that $t\equiv 2 \pmod{4}$.
We prove the existence of $a,d \in \fqn^*$ such that
\[d \psi_{k,t,\sigma^{t-1}}(x)=\psi_{k,t,\sigma}(ax),\]
that is
\begin{equation} \label{eq:casesb=0s=(t-1)s'}
\begin{cases}
d=\sigma(a)=\sigma^{t-1}(a),\\
dk\sigma(k)=-k\sigma(k)\sigma^{t+1}(a),\\
dk\sigma^{-1}(k^{-1})=-k\sigma^{-1}(k^{-1})\sigma^{-1}(a).
\end{cases}
\end{equation}
A solution to \eqref{eq:casesb=0s=(t-1)s'} is given by $a \in \F_{q^4}$ such that $a^{q^2-1}=-1$ and $d=\sigma(a)$.
As for Case I we have that denoting $\tau(x)=ax$, then
\[ \C_{k,t,\sigma}\circ \tau=\langle ax,d\psi_{k,t,\sigma^{t-1}}(x)\rangle_{\fqn}=\C_{k,t,\sigma^{t-1}}, \]
and the two codes $\C_{k,t,\sigma^{t-1}}$ and $\C_{h,t,\sigma}$ are equivalent. The assertion then follows by the Case I.
Suppose that $t\equiv 0 \pmod{4}$.
We find $b,c\in \fqn^*$ such that 
\begin{equation}\label{eq:s=t-1map}
cx=\psi_{k,t,\sigma^{t-1}}(b\psi_{k,t,\sigma}(x)),
\end{equation}
which is enough to prove that $\C_{k,t,\sigma^{t-1}}$ is equivalent to $\C_{k,t,\sigma}$.
Equation \eqref{eq:s=t-1map} reads as
\begin{align*}
cx+d\psi_{k,t,\sigma^{t-1}}(x)& =\psi_{k,t,\sigma}(ax) \\
& + (\sigma(b)-\sigma^2(k^{-1})k\sigma^{t+1}(b))x^{\sigma^{2}}\\
& +(\sigma^{t-1}(k)\sigma^{t-2}(k^{-1})\sigma^{t-1}(b)-k\sigma^{-1}(k^{-1})\sigma^{-1}(b))x^{\sigma^{t-2}} \\
& +(\sigma(b)+\sigma^{t-1}(b)-k^2\sigma(k)\sigma^{t+1}(b)-k^2\sigma^{-1}(b))x^{\sigma^{t}} \\
& +(\sigma^{t-1}(b)+\sigma^{t-2}(k)k\sigma^{-1}(b))x^{\sigma^{2t-2}}\\
& -(-\sigma(k)\sigma^2(k)\sigma(b)+k\sigma(k)\sigma^{t+1}(b))x^{\sigma^{t+2}} \\
& +(-\sigma(b)k^{-1}\sigma(k)+k^{-1}\sigma^{t-1}(k)\sigma^{t-1}(b)+k\sigma(k)\sigma^{t+1}(b)+k\sigma^{-1}(k^{-1})\sigma^{-1}(b)) x.
\end{align*}

By choosing 
$$\displaystyle b=\frac{\sigma^{-1}(\omega)}{\sigma^{-1}(h^{-1})-\sigma(h)},$$ with $\omega \in \F_{q^4}$ such that $\omega^{q^2-1}=-1$ and 
\[ c=-\sigma(b)k^{-1}\sigma(k)-k^{-1}\sigma^{-1}(k^{-1})\sigma^{t-1}(b)+k\sigma(k)\sigma^{t+1}(b)+k\sigma^{-1}(k^{-1})\sigma^{-1}(b) \ne0 \]
Equation \eqref{eq:s=t-1map} is satisfied, and so $\C_{k,t,\sigma^{t-1}}$ and $\C_{k,t,\sigma}$ are equivalent. The assertion follows by Case I.

\medskip

\noindent \underline{\textbf{Case IV:} $s=t+1$}.
Let us prove that $\C_{k,t,\sigma^{t+1}}$ is equivalent to $\C_{k^{-1},t,\sigma}$, and then we use again the case $s=1$.
We divide the discussion in two cases.
First assume that $t \equiv 0 \pmod{4}$.
We prove the existence of $a,d \in \fqn^*$ such that
\[d \psi_{k,t,\sigma^{t+1}}(x)=\psi_{k,t,\sigma}(ax),\]
that is
\begin{equation} \label{eq:casesb=0s=(t+1)s'}
\begin{cases}
-dk^{-1}\sigma(k)=\sigma(a) \\
dk^{-1}\sigma^{t-1}(k)=\sigma^{t-1}(a) \\
d=\sigma(k)k\sigma^{t+1}(a) \\
d=\sigma^{-1}(k^{-1})\sigma^{-1}(a) .
\end{cases}
\end{equation}
A solution to \eqref{eq:casesb=0s=(t+1)s'} is given by $a=l k$, with $l\in \F_{q^4}$ such that $l^{q^2+1}=-1$, and $d=-k\sigma(l)$. 
As for Case I we have that denoting $\tau(x)=ax$, then
\[ \C_{k,t,\sigma}\circ \tau=\langle ax,d\psi_{k,t,\sigma^{t+1}}(x)\rangle_{\fqn}=\C_{k,t,\sigma^{t-1}}, \]
and the two codes $\C_{k,t,\sigma^{t+1}}$ and $\C_{h,t,\sigma}$ are equivalent. The assertion then follows by the Case I.
Finally, assume $t \equiv 2 \pmod{4}$.
We find $b,c \in \fqn^*$ such that 
\begin{equation}\label{eq:s=t+1prima} cx=\psi_{k,t,\sigma^{t+1}}(b\psi_{k^{-1},t,\sigma}(x)), \end{equation}
that is
\begin{align*}
cx&= (\sigma(k)k\sigma^{t+1}(b)-\sigma(k^{-1})\sigma^2(k)\sigma(b))x^{\sigma^{2}}\\
&+(\sigma^{t-1}(b)+\sigma^{-1}(k^{-2})k\sigma^{t-2}(k)\sigma^{-1}(b))x^{\sigma^{t-2}}\\
&+(\sigma(k)k \sigma^{t+1}(b)+ \sigma^{-1}(k^{-1})k\sigma^{-1}(b)-\sigma(k^{-1})k^{-1}\sigma(b)+\sigma^{t-1}(k^{-1})k^{-1}\sigma^{t-1}(b))x^{\sigma^{t}}\\
&+(\sigma^{-1}(k^{-1})k\sigma^{-1}(b)+\sigma^{t-1}(k^{-1})\sigma^{-2}(k)\sigma^{t-1}(b))x^{\sigma^{2t-2}}\\
&+(\sigma(b)+\sigma^{t-1}(b)-\sigma^2(k)k^2\sigma^{t+1}(b)-\sigma^{-1}(k^{-2})k^2 \sigma^{-1}(b)) x\\
&+(\sigma(b)-k\sigma^2(k)\sigma^2(k^{-1})\sigma^{t+1}(b))x^{\sigma^{t+2}}.
\end{align*}
Arguing as in Case II, 
we can choose 
\[ b=\frac{\sigma^{-1}(\omega)}{h\sigma(h^{-1})-h\sigma^{-1}(h)}, \]
with $\omega\in \F_{q^4}$ such that $\omega^{q^2-1}=-1$, and  $$c=\sigma(b)+\sigma^{t-1}(b)-\sigma^2(k)k^2\sigma^{t+1}(b)-\sigma^{-1}(k^{-2})k^2 \sigma^{-1}(b)\ne0.$$ With this choice of $b$ and $c$,  \eqref{eq:s=t+1prima} is satisfied, so that the two codes $\cC_{k,t,\sigma}$ and $\cC_{k^{-1},t,\sigma^{t+1}}$ are equivalent. The assertion follows again by Case I.
\end{proof}

Theorem \ref{th:equivspecialcase} characterizes all the cases in which two codes $\cC_{h,t,\sigma}$ and $\cC_{k,t,\theta}$ are equivalent, when we restrict to have $\theta=\sigma^{s}$, for $s \in \{1,t-1,t+1,2t-1\}$. 

We now analyze the  cases in which  $\theta=\sigma^{s}$ with $s \notin \{1,t-1,t+1,2t-1\}$, showing that two codes of the form $\cC_{h,t,\sigma}$ and $\cC_{k,t,\theta}$ are never equivalent. The following computations will be crucial  for the study of these remaining cases. These computations aim to describe the codes obtained from $\cC_{h,t,\sigma}$ together with some automorphism applied to it. The first thing to observe is that  $\cD_{h,t,\sigma}^{(j)}:=\cC_{h,t,\sigma}+\sigma^{j}(\cC_{h,t,\sigma})$ can have dimension $2$, $3$ or $4$, and when $j$ is coprime with $n=2t$, then the only possibility is that $\dim_{\fqn}(\cD)=4$; cf. Proposition \ref{prop:not_Gabidulin}. However, an interesting object comes when we choose $j=t$. Indeed, one can easily check that 
\begin{equation}\label{eq:c_plus_ct}\cD_{h,t,\sigma}^{(t)}=\langle x, x^{\sigma^t}, x^{\sigma}+h\sigma(h)x^{\sigma^{t+1}}, x^{\sigma^{t-1}}+h\sigma^{-1}(h^{-1})x^{\sigma^{2t-1}} \rangle_{\fqn}. \end{equation}
This object has a nice and compact description and it will be crucial for determining the inequivalence of codes of  $\cC_{h,t,\sigma}$ and $\cC_{h,t,\theta}$. In order to derive these results, we need to apply another automorphism to $\cD_{h,t,\sigma}^{(t)}$. 

\begin{proposition}\label{prop:d_plus_sigma}
 The following hold:
 \begin{align*}
     s_1^{\sigma^s}(\cD_{h,t,\sigma}^{(t)})&=6, \qquad \mbox{ for } s \in \{1,-1,t-1,t+1\},\\
        s_1^{\sigma^s}(\cD_{h,t,\sigma}^{(t)})&\geq 7, \qquad \mbox{ for } s \not\in \{1,-1,t-1,t+1\},
 \end{align*}
\end{proposition}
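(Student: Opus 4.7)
Plan: I would parametrize $\cD_{h,t,\sigma}^{(t)} + \sigma^s(\cD_{h,t,\sigma}^{(t)})$ via the eight $\fqn$-generators $\{g_i, \sigma^s(g_i)\}_{i=1}^4$, where $g_1 = x$, $g_2 = x^{\sigma^t}$, $g_3 = x^\sigma + h\sigma(h) x^{\sigma^{t+1}}$, $g_4 = x^{\sigma^{t-1}} + h\sigma^{-1}(h^{-1}) x^{\sigma^{2t-1}}$ are the four generators of $\cD_{h,t,\sigma}^{(t)}$ from \eqref{eq:c_plus_ct}. The key observation is that these split by $\sigma$-support into four singletons $\{0\}, \{t\}, \{s\}, \{t+s\}$ and four non-singleton supports $\{1, t+1\}, \{t-1, 2t-1\}, \{s+1, t+s+1\}, \{s-1, t+s-1\}$, each of the form $\{a, a+t\}$. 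Since the $\sigma$-support of a linear combination is contained in the union of the supports of its summands, all $\fqn$-linear relations among the eight generators are governed by coincidences among these supports. I implicitly assume $\gcd(s, 2t) = 1$ (the case relevant for the intended application, since otherwise $\sigma^s$ is not a generator of $\Gal(\fqn/\fq)$ and the statement can fail, e.g., $\sigma^t(\cD_{h,t,\sigma}^{(t)}) = \cD_{h,t,\sigma}^{(t)}$ gives $s_1 = 4$).

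For $s \in \{1, -1, t-1, t+1\}$, a short check shows that one non-singleton support coincides with $\{s, t+s\}$ (for $s = 1$ and $s = t+1$ this is $\supp_\sigma(g_3)$, for $s = -1$ and $s = t-1$ it is $\supp_\sigma(g_4)$), and by shift symmetry one of $\supp_\sigma(\sigma^s(g_3)), \supp_\sigma(\sigma^s(g_4))$ equals $\{0, t\}$. Using $\sigma^t(h)h = -1$, I would promote these support containments to explicit $\fqn$-linear identities: for instance, when $s = 1$, direct calculation yields $g_3 = \sigma(g_1) + h\sigma(h)\sigma(g_2)$ and $\sigma(g_4) = (\sigma(h)/h)\, g_1 + g_2$, and the remaining three values of $s$ are analogous after permuting the roles of $g_3, g_4$ and their $\sigma^s$-translates. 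These produce two independent linear relations; the surviving six generators have pairwise disjoint $\sigma$-supports when $t \geq 5$, hence are $\fqn$-linearly independent. This gives $s_1^{\sigma^s}(\cD_{h,t,\sigma}^{(t)}) = 6$.

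For $s \notin \{1, -1, t-1, t+1\}$ with $\gcd(s, 2t) = 1$, a routine modular check shows that the four singletons $\{0\}, \{t\}, \{s\}, \{t+s\}$ are pairwise distinct and entirely disjoint from every non-singleton support; consequently $g_1, g_2, \sigma^s(g_1), \sigma^s(g_2)$ are independent and contribute dimension $4$, while no nontrivial $\fqn$-combination of $g_3, g_4, \sigma^s(g_3), \sigma^s(g_4)$ lies in their span. Comparing the four non-singleton supports directly modulo $2t$, the only possible coincidences are $\supp_\sigma(g_3) = \supp_\sigma(\sigma^s(g_4))$ when $s \equiv t+2 \pmod{2t}$ and $\supp_\sigma(g_4) = \supp_\sigma(\sigma^s(g_3))$ when $s \equiv t-2 \pmod{2t}$, both of which force $t$ odd. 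In the absence of coincidence, the four remaining generators are independent and the total dimension is $8$. In the borderline case $s = t+2$, proportionality of $g_3$ and $\sigma^{t+2}(g_4)$ would, after using $\sigma^t(h)h = -1$, reduce to the identity $h\sigma^2(h) = 1$, which is forbidden by Lemma \ref{lem:auxiliary}; the case $s = t-2$ is symmetric. Therefore $s_1^{\sigma^s}(\cD_{h,t,\sigma}^{(t)}) = 8 \geq 7$ in all remaining cases.

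The main obstacle I anticipate is the careful bookkeeping of all coincidences among the eight $\sigma$-supports modulo $2t$, together with verifying that the only genuinely threatening would-be dependency, arising in the borderline cases $s = t \pm 2$, is obstructed by Lemma \ref{lem:auxiliary}. All other steps are routine coefficient identities flowing from the defining relation $\sigma^t(h)h = -1$.
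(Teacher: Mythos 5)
Your proposal is correct and follows essentially the same route as the paper: both arguments reduce everything to bookkeeping of the $\sigma$-supports (all of the form $\{a\}$ or $\{a,a+t\}$) of the eight generators of $\cD_{h,t,\sigma}^{(t)}+\sigma^s(\cD_{h,t,\sigma}^{(t)})$, the paper by listing the generators of the sum explicitly for $s\in\{1,-1,t-1,t+1\}$ and, for the remaining $s$, by showing that the supports of $\sigma^s(g_3)$ and $\sigma^s(g_4)$ cannot both lie in the universal support of the other six generators. The only (harmless) differences are that you make the implicit hypothesis $\gcd(s,2t)=1$ explicit, and that in the borderline cases $s=t\pm 2$ you invoke Lemma \ref{lem:auxiliary} to get dimension exactly $8$, whereas the paper stops at the bound $\geq 7$ without needing that lemma.
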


\begin{proof} In order to lighten the notation, let us write $\cD:=\cD_{h,t,\sigma}^{(t)}$. With straightforward computations we derive
  \begin{align*}\cD+\sigma(\cD)&=\langle x, x^\sigma,x^{\sigma^t}, x^{\sigma^{t+1}}, x^{\sigma^{2}}+\sigma(h)\sigma^{2}(h)x^{\sigma^{t+2}}, x^{\sigma^{t-1}}+h\sigma^{-1}(h^{-1})x^{\sigma^{2t-1}}\rangle_{\fqn}, \\
    \cD+\sigma^{t-1}(\cD)&=\langle x, x^{\sigma^{t-1}},x^{\sigma^t},x^{\sigma^{2t-1}}, x^{\sigma}+h\sigma(h)x^{\sigma^{t+1}}, x^{\sigma^{2t-2}}+\sigma^{-1}(h^{-1})\sigma^{-2}(h)x^{\sigma^{t-2}}  \rangle_{\fqn},\\
    \cD+\sigma^{t+1}(\cD)&=\langle x, x^{\sigma}, x^{\sigma^t},  x^{\sigma^{t+1}},  x^{\sigma^{t+2}}+\sigma(h^{-1})\sigma^2(h^{-1})x^{\sigma^{2}}, x^{\sigma^{t-1}}+h\sigma^{-1}(h^{-1})x^{\sigma^{2t-1}} \rangle_{\fqn},\\
      \cD+\sigma^{2t-1}(\cD)&=\langle x, x^{\sigma^{t-1}}, x^{\sigma^t}, x^{\sigma^{2t-1}}, x^{\sigma}+h\sigma(h)x^{\sigma^{t+1}}, x^{\sigma^{t-2}}+\sigma^{-1}(h)\sigma^{-2}(h^{-1})x^{\sigma^{2t-2}} \rangle_{\fqn}.
     \end{align*}
     In all the four cases, the generators of the code have disjoint $\sigma$-supports, showing that the dimension is $6$.
     
     Let us assume now that $s\notin\{1.t-1,t+1,2t-1\}$. Then
        \begin{align*}\cD+\sigma^{s}(\cD)=\langle x, x^{\sigma^t}, x^{\sigma^s},x^{\sigma^{s+t}},f_1(x), f_2(x), f_3(x), f_4(x) \rangle_{\fqn}, \end{align*}
        where 
        $$\begin{array}{rlrl}
            f_1(x)\!\!\!\!&= x^{\sigma}+h\sigma(h)x^{\sigma^{t+1}}, & f_2(x)\!\!\!\!&= x^{\sigma^{t-1}}+h\sigma^{-1}(h^{-1})x^{\sigma^{2t-1}},\\
            f_3(x)\!\!\!\!&=x^{\sigma^{s+1}}+\sigma^s(h)\sigma^{s+1}(h)x^{\sigma^{t+s+1}}, \qquad & f_4(x)\!\!\!\! &=x^{\sigma^{s+t+1}}+\sigma^s(h)\sigma^{s-1}(h^{-1})x^{\sigma^{s-1}}.
        \end{array}$$
    First, observe that the $\sigma$-supports of the first six polynomials are pairwise distinct, thus they are $\fqn$-linearly independent and  $\cV:=\langle x, x^{\sigma^t}, x^{\sigma^s},x^{\sigma^{s+t}},f_1(x), f_2(x)\rangle_{\fqn}$ is  a $6$-dimensional $\fqn$-subspace. Moreover, the union of their $\sigma$-supports is 
    $$\mathcal P=\{0,s,t,s+t,1,t+1,t-1,2t-1\}.$$
    Now, assume by contradiction that $f_3(x),f_4(x)\in\cV$.
    This implies that $\supp_\sigma(f_3(x))=\{s+1,t+s+1\}\subseteq\mathcal P$.
    Due to the assumptions on $s$, the only case for which this is possible is 
    $s\in\{t-2,2t-2\}$. In both cases, it is immediate to see that $\supp_\sigma(f_4(x))=\{t-3,2t-3\}$ is not contained in $\mathcal P=\{0,t-2,t,2t-2,1,t+1,t-1,2t-1\}$, yielding a contradiction.  
\end{proof}

\begin{remark}
 From the computations of $\cD_{h,t,\sigma}^{(t)}+\sigma^{j}(\cD_{h,t,\sigma}^{(t)})$, for $j \in \{1,t-1,t+1,2t-1\}$ in the proof of  Proposition \ref{prop:d_plus_sigma}, we can immediately observe that 
 \begin{align*}
     \cD_{h,t,\sigma}^{(t)}+\sigma(\cD_{h,t,\sigma}^{(t)})&=\cD_{h,t,\sigma}^{(t)}+\sigma^{t+1}(\cD_{h,t,\sigma}^{(t)}) \\
     \cD_{h,t,\sigma}^{(t)}+\sigma^{t-1}(\cD_{h,t,\sigma}^{(t)}) &=\cD_{h,t,\sigma}^{(t)}+\sigma^{2t-1}(\cD_{h,t,\sigma}^{(t)}).
 \end{align*}
 This also suggests to consider
 another space that is highly nongeneric, which is \begin{align*}
     \cT:= & \cD_{h,t,\sigma}^{(t)}+\sigma(\cD_{h,t,\sigma}^{(t)})+\sigma^{t-1}(\cD_{h,t,\sigma}^{(t)})+\sigma^{t+1}(\cD_{h,t,\sigma}^{(t)})+\sigma^{2t-1}(\cD_{h,t,\sigma}^{(t)}) \\
     =& \cD_{h,t,\sigma}^{(t)}+\sigma(\cD_{h,t,\sigma}^{(t)})+\sigma^{t-1}(\cD_{h,t,\sigma}^{(t)}).
 \end{align*} A straightforward computation shows that 
 $$ \cT= \langle  x, x^\sigma,  x^{\sigma^{t-1}},x^{\sigma^t}, x^{\sigma^{t+1}}, x^{\sigma^{2t-1}}, x^{\sigma^{2}}+\sigma(h)\sigma^{2}(h)x^{\sigma^{t+2}},x^{\sigma^{t-2}}+\sigma^{-1}(h)\sigma^{-2}(h^{-1})x^{\sigma^{2t-2}} \rangle_{\fqn}.$$
 Since all the generators have disjoint $\sigma$-supports, then $\dim_{\fqn} (\cT)=8$. Moreover, we can express this code as
 $$ \cT= \cC_{h,t,\sigma}+\sigma(\cC_{h,t,\sigma})+\sigma^{t-1}(\cC_{h,t,\sigma})+\sigma^t(\cC_{h,t,\sigma})+\sigma^{t+1}(\cC_{h,t,\sigma})+\sigma^{2t-1}(\cC_{h,t,\sigma}).$$
\end{remark}

\begin{corollary}\label{cor:inequivalence_newMRD}
 If $s \notin \{1,-1,t-1,t+1\}$, then $\cC_{h,t,\sigma}$ and $\cC_{k,t,\sigma^s}$ are not equivalent.
\end{corollary}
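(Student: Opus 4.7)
The plan is to use Proposition \ref{prop:equivalence_sum} together with Proposition \ref{prop:d_plus_sigma} to produce an invariant that separates $\cC_{h,t,\sigma}$ from $\cC_{k,t,\sigma^s}$ whenever $s \notin \{1,-1,t-1,t+1\}$. The guiding observation is that Proposition \ref{prop:d_plus_sigma} detects exactly the four ``small'' values of $j$ for which the sum $\cD_{h,t,\sigma}^{(t)}+\sigma^j(\cD_{h,t,\sigma}^{(t)})$ is nongeneric (dimension~$6$), and that this list of privileged exponents is intrinsic to the chosen base automorphism. Changing the base automorphism from $\sigma$ to $\sigma^s$ should therefore shift this list in a way that we can detect by applying Galois automorphisms.

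First I would record the easy but crucial arithmetic fact that, since $\sigma^s$ must generate $\Gal(\fqn/\fq)$, we need $\gcd(s,2t)=1$; in particular $s$ is odd, so $st\equiv t\pmod{2t}$. Consequently $(\sigma^s)^t=\sigma^t$, and therefore
$$\cD_{k,t,\sigma^s}^{(t)}=\cC_{k,t,\sigma^s}+(\sigma^s)^t(\cC_{k,t,\sigma^s})=\cC_{k,t,\sigma^s}+\sigma^t(\cC_{k,t,\sigma^s}),$$
so the notation is unambiguous. The proof of Proposition \ref{prop:d_plus_sigma} is entirely a support calculation in $\mathbb{Z}/2t\mathbb{Z}$ and does not depend on which generator is used; applying it with $\sigma^s$ in place of $\sigma$ yields that $\dim_{\fqn}\bigl(\cD_{k,t,\sigma^s}^{(t)}+(\sigma^s)^{j'}(\cD_{k,t,\sigma^s}^{(t)})\bigr)=6$ precisely when $j'\in\{1,-1,t-1,t+1\}$, and it is at least $7$ otherwise.

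Next I would translate this into $\sigma$-translations. Since $\gcd(s,2t)=1$, the map $j'\mapsto sj' \pmod{2t}$ is a bijection on $\mathbb{Z}/2t\mathbb{Z}$, so $\sigma^i=(\sigma^s)^{j'}$ iff $i\equiv sj'\pmod{2t}$. Hence the set of exponents $i$ for which $\cD_{k,t,\sigma^s}^{(t)}+\sigma^i(\cD_{k,t,\sigma^s}^{(t)})$ has dimension $6$ is exactly $\{s,-s,t-s,t+s\}\pmod{2t}$, and it is at least $7$ for all other $i$.

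Finally, assume for a contradiction that $\cC_{h,t,\sigma}$ and $\cC_{k,t,\sigma^s}$ are equivalent. Applying Proposition \ref{prop:equivalence_sum} with the four automorphisms $\{\mathrm{id},\sigma^t,\sigma,\sigma^{t+1}\}$ forces
$$\dim_{\fqn}\bigl(\cD_{h,t,\sigma}^{(t)}+\sigma(\cD_{h,t,\sigma}^{(t)})\bigr)=\dim_{\fqn}\bigl(\cD_{k,t,\sigma^s}^{(t)}+\sigma(\cD_{k,t,\sigma^s}^{(t)})\bigr).$$
By Proposition \ref{prop:d_plus_sigma} the left-hand side equals $6$, while by the previous paragraph the right-hand side equals $6$ only if $1\in\{s,-s,t-s,t+s\}\pmod{2t}$, i.e.\ only if $s\in\{1,-1,t-1,t+1\}$. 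Since this is excluded by hypothesis, we obtain a contradiction, and the two codes are inequivalent. The only delicate point in the argument is checking that the support-based proof of Proposition \ref{prop:d_plus_sigma} transfers verbatim when the base generator is replaced by $\sigma^s$, but this is immediate because that proof never uses a particular Frobenius structure of $\sigma$, only the cyclic structure of $\mathbb{Z}/2t\mathbb{Z}$.
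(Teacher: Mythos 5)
Your proof is correct and follows essentially the same route as the paper: both arguments reduce to $\cD_{h,t,\sigma}^{(t)}$ and $\cD_{k,t,\sigma^s}^{(t)}$ via $\sigma^{st}=\sigma^t$ and then invoke Proposition \ref{prop:equivalence_sum} together with Proposition \ref{prop:d_plus_sigma} to exhibit a dimension mismatch. The only (immaterial) difference is that you apply the extra automorphism $\sigma$, obtaining $6$ on the $\cC_{h,t,\sigma}$ side and $\geq 7$ on the other, whereas the paper applies $\sigma^s$ and gets the mirror-image mismatch; your explicit remark that Proposition \ref{prop:d_plus_sigma} holds verbatim for the generator $\sigma^s$ is a point the paper leaves implicit.
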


\begin{proof}
 Assume by contradiction that $\cC_{h,t,\sigma}$ and $\cC_{k,t,\sigma^s}$ are  equivalent. Hence, by Proposition \ref{prop:equivalence_sum} also 
 $\cC_{h,t,\sigma}+\sigma^{st}(\cC_{h,t,\sigma})$ and $\cC_{k,t,\sigma^s}+\sigma^{st}(\cC_{k,t,\sigma^s})=\cD_{k,t,\sigma^s}^{(t)}$ are equivalent. Moreover, observe that $\sigma^{st}=\sigma^t$, which means that $\cC_{h,t,\sigma}+\sigma^{st}(\cC_{h,t,\sigma})=\cD_{h,t,\sigma}^{(t)}$. Using again Proposition \ref{prop:equivalence_sum}, we also derive that 
$s_1^{\sigma^s}(\cD_{h,t,\sigma}^{(t)})= s_1^{\sigma^s}(\cD_{h,t,\sigma^s}^{(t)})$, but this contradicts Proposition \ref{prop:d_plus_sigma}.
\end{proof}

As a byproduct of all the results obtained in this section, we can also determine the exact number and a lower bound on the number of equivalence classes of the codes $\C_{h,t,\sigma}$. 

\begin{theorem}\label{th:numexact}
The number of  equivalence classes of the codes $\cC_{h,t,\sigma}$ is 
$$ \frac{\varphi(t)}{4rtj_t}\sum_{i=0}^{2rt-1} \deg ( \gcd(x^{p^{rt}+1}+1,x^{j_t(p^i-1)}-1)), $$
where $$ j_t=\begin{cases}
2 & \mbox{ if } t \not\equiv 2 \, \pmod{4},\\
p^{2r}+1 & \mbox{ if } t \equiv 2 \, \pmod{4}, \\
\end{cases}$$ and $\varphi$ is the Euler's totient function.
\end{theorem}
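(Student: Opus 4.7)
My plan is to translate the equivalence problem into an orbit-counting problem and then apply Burnside's lemma. Consider the parameter space
\[
\mathcal{P} := \{(h,\sigma) : \sigma \text{ is a generator of } \Gal(\fqn/\fq),\ h \in \fqn,\ h^{q^t+1} = -1\},
\]
so that $|\mathcal{P}| = \varphi(2t)(q^t+1)$, and let $S_t \subseteq (\mathbb{Z}/2t\mathbb{Z})^*$ be $\{1,-1,t-1,t+1\}$ when $t$ is even and $\{1,-1\}$ when $t$ is odd (each element is coprime to $2t$ in the respective case). Let $\mu_{j_t}\subseteq\fqn^*$ denote the group of $j_t$-th roots of unity; the divisibility $j_t\mid q^t+1$ holds in both cases since $q$ is odd. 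Define a group $G$ of order $2rt\cdot j_t\cdot |S_t|$ acting on $\mathcal{P}$ via
\[
(\rho,\lambda,s)\cdot(h,\sigma) := \bigl(\lambda\,\rho(h)^{\epsilon(s)},\ \sigma^s\bigr), \qquad \epsilon(s) = \begin{cases} +1, & s\in\{1,t-1\},\\ -1, & s\in\{-1,t+1\},\end{cases}
\]
where $\rho \in \Aut(\fqn)$. First I will verify this is a well-defined action (noting that $\epsilon\colon S_t\to\{\pm 1\}$ is a group homomorphism) and that, by Theorem~\ref{th:equivspecialcase} combined with Corollary~\ref{cor:inequivalence_newMRD}, its orbits are exactly the equivalence classes of the codes $\mathcal{C}_{h,t,\sigma}$.

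Next I apply Burnside's lemma: $|\mathcal{P}/G| = \tfrac{1}{|G|}\sum_{g\in G}|\mathcal{P}^g|$. If $(\rho,\lambda,s)$ fixes some pair, then $\sigma^s = \sigma$ on a generator of order $2t$, forcing $s=1$; only these elements contribute. Writing $\rho$ as the $p^i$-Frobenius, the fixing condition becomes $h^{p^i-1} = \lambda^{-1}$, which is independent of $\sigma$, so the $\varphi(2t)$ choices for $\sigma$ factor out. Summing over $\lambda\in\mu_{j_t}$ collapses the individual coset conditions into the single condition $h^{j_t(p^i-1)}=1$, hence
\[
\sum_{\lambda\in\mu_{j_t}}|\mathcal{P}^{(\rho,\lambda,1)}| = \varphi(2t)\cdot \deg\gcd\!\bigl(x^{q^t+1}+1,\ x^{j_t(p^i-1)}-1\bigr),
\]
where for $i=0$ the gcd is interpreted as $x^{q^t+1}+1$ of degree $q^t+1$. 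Both polynomials are separable in $\fqn$ (since $p\nmid q^t+1$ and $p\nmid j_t(p^i-1)$), so the degree of the gcd genuinely counts the common roots.

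Putting everything together and using $|G| = 2rt\cdot j_t\cdot |S_t|$ together with the identity $\varphi(2t)/|S_t| = \varphi(t)/2$ (verified in the cases $t$ odd and $t$ even separately, since $\varphi(2t)=\varphi(t)$ with $|S_t|=2$, resp.\ $\varphi(2t)=2\varphi(t)$ with $|S_t|=4$) yields the claimed formula. The main obstacle will be the bookkeeping needed to define the group $G$ correctly: one has to check that the sign-flip involution $\iota\colon(h,\sigma)\mapsto(h^{-1},\sigma^{-1})$ and the shift involution $\tau\colon(h,\sigma)\mapsto(h,\sigma^{t-1})$ (when $t$ is even) compose properly with $\Aut(\fqn)$ and $\mu_{j_t}$ into a single group, and that the enlarged scalar stabilizer $\mu_{q^2+1}$ for $t\equiv 2\pmod 4$ is exactly the one dictated by Theorem~\ref{th:equivspecialcase}. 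Once the setup is in place, the fixed-point count reduces to the routine degree-of-gcd calculation displayed above.
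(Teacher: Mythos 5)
Your proposal is correct and follows essentially the same route as the paper: identify equivalence classes with orbits of an explicit group built from Theorem \ref{th:equivspecialcase} and Corollary \ref{cor:inequivalence_newMRD}, apply Burnside's lemma, and count fixed points as the degree of $\gcd(x^{p^{rt}+1}+1,x^{j_t(p^i-1)}-1)$. The only (immaterial) difference is organizational: the paper first quotients by the scalar group $\mu_{j_t}$ and runs Burnside on $\Aut(\fqn)$ alone, handling the $\sigma$-shifts by a separate count of $\varphi(2t)/|S_t|=\varphi(t)/2$, whereas you bundle $\Aut(\fqn)$, $\mu_{j_t}$ and $S_t$ into one group and apply Burnside once.
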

\begin{proof}
We start by determining the number of orbits of $A_{h,t}$ under the action of $\Aut(\fqn)$, where
$$A_{h,t}:=\begin{cases}
\{h,-h\} & \mbox{ if } t \not\equiv 2 \, \pmod{4}\\
\{lh \, : \, l^{q^2+1}=1 \} & \mbox{ if } t \equiv 2 \, \pmod{4}.
\end{cases}$$
First, observe that we can reduce ourselves to study this action. For each $\rho \in \Aut(\fqn)$, either we have $\rho(A_{h,t})=A_{h,t}$ or $\rho(A_{h,t})\cap A_{h,t}=\emptyset$. This is because in both cases $A_{h,t}$ is of the form $K h$, where $K$ is the multiplicative subgroup of $\fqn^*$ given by
$$ K=\begin{cases}
\{\pm 1\} & \mbox{ if } t \not\equiv 2 \, \pmod{4}, \\
\{l \, : \, l^{q^2+1}=1\} & \mbox{ if } t \equiv 2 \, \pmod{4}.
\end{cases}$$
Therefore, we have that $\rho(K)$ is a subgroup of $\fqn^*$ with the same order. Since $\fqn^*$ is cyclic, then $\rho(K)=K$ and thus $\rho(Kh)=K\rho(h)$. Hence, we can just consider whether $\rho(h)\in A_{h,t}$ or not. In particular, $G:=\Aut(\fqn)$ acts on the set
$$ X:=\{A_{h,t}\, :\, \sigma^t(h)h=-1  \},$$
and denote by $|X/G|$ the number of orbits of the action of $G$ on $X$. This can be computed  by means of Burnside's lemma, which reads as
\begin{equation}\label{eq:burnside}|X/G|=\frac{1}{|G|}\sum_{g \in G} |X^g|,\end{equation}
where $X^g=\{x \in X \, : \, g(x)=x\}$. 

Let $x:=A_{h,t}$ and let $\theta$ denote the $p$-Frobenius automorphism of $\fqn$, that is $\theta(\alpha)=\alpha^p$ for every $\alpha \in \fqn$. Then, each $\rho$ can be written as $\theta^i$ for some $i \in \ZZ{2rt}$. In particular,
\begin{align*}X^{\theta^i}&=\left\{A_{h,t} \,: \, \theta^{rt}(h)=-h^{-1}, \theta^{i}(h) \in A_{h,t} \right\} \\
&=\bigcup_{\kappa  \in K}\left\{A_{h,t} \,: \, \theta^{rt}(h)=-h^{-1}, \theta^{i}(h)=\kappa h \right\} \\
&= \bigcup_{\kappa  \in K}\left\{A_{h,t} \,: \, h^{p^{rt}+1}+1=0, h^{p^i}-\kappa h=0 \right\}\\
&= \Big\{A_{h,t} \,: \, h^{p^{rt}+1}+1=0, \prod_{\kappa \in K}(h^{p^i-1}-\kappa )=0 \Big\}\\
&= \left\{A_{h,t} \,: \, h^{p^{rt}+1}+1=0, h^{j_t(p^i-1)}-1=0 \right\}.\end{align*}
Since we have to consider only one representative for each set $A_{h,t}$, we deduce
\begin{equation}\label{eq:gcd_degree} |X^{\theta^i}|=\frac{1}{j_t}\deg(\gcd(x^{p^{rt}+1}+1,x^{j_t(p^i-1)}-1)).
\end{equation}
Thus, combining this with the fact that 
$$\varphi(2t)=\begin{cases}
2\varphi(t) & \mbox{ if } t \mbox{ is even }\\
\varphi(t) & \mbox{ if } t \mbox{ is odd} 
\end{cases}$$ 
we can conclude that the number of inequivalent codes of the form $\C_{h,t,\sigma}$ in $\cL_{n,q}$ is
$$ \frac{\varphi(t)}{4rtj_t}\sum_{i=0}^{2rt-1} \deg ( \gcd(x^{p^{rt}+1}+1,x^{j_t(p^i-1)}-1)).$$
\end{proof}

\begin{remark}
In some cases we can state the number of inequivalent classes of Theorem \ref{th:numexact} in a more direct way.
Assume that either $p\equiv 1 \,\pmod{4}$ or $rt$ is even. Then we have
$$\begin{cases}
p^{rt}+1 &\equiv 2 \, \pmod{4}, \\
j_t(p^{i}-1) \!\!\!&\equiv 0 \, \pmod{4},
\end{cases}$$
and by \cite[Remark 4.1]{mcguiretrinomials} we obtain that 
$$|X^{\theta^i}|=\gcd(p^{rt}+1,j_t(p^i-1)).$$
This is for instance the case for $t\equiv 2 \, \pmod{4}$, that is $t=2t'$ with $t'$ odd, in which $j_t=p^{2r}+1$. It is easy to see that $p^{2r}+1$ divides $p^{2rt'}+1$, and we have
$$ |X^{\theta^i}|=(p^{2r}+1)\gcd(p^{2r(t'-1)}-p^{2r(t'-2)}+\ldots +1,p^i-1).$$
So, the number of inequevalent classes of the codes $\C_{h,t,\sigma}$ is
$$ \frac{\varphi(t)}{4rt}\sum_{i=0}^{2rt-1} \gcd(p^{2r(t'-1)}-p^{2r(t'-2)}+\ldots +1,p^i-1). $$
\end{remark}

Unfortunately, the result of Theorem \ref{th:numexact} is quite implicit. However, we can give the following more explicit lower bound.

\begin{theorem}\label{th:lowbound}
The number of  equivalence classes of the codes $\cC_{h,t,\sigma}$ is at least
$$ \begin{dcases}
\left\lfloor\dfrac{\varphi(t)(q^t+1)}{4rt(q^2+1)}\right\rfloor & \mbox{ if } t\equiv 2 \pmod{4}, \\
\left\lfloor\dfrac{\varphi(t)(q^t+1)}{8rt}\right\rfloor & \mbox{ if }  t \not\equiv 2 \pmod{4}, 
\end{dcases} $$
where $\varphi$ is the Euler's totient function.
\end{theorem}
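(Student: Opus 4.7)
The plan is to deduce the explicit lower bound directly from the exact count in Theorem \ref{th:numexact} by retaining only the contribution of a single index in the Burnside sum, namely $i=0$, and discarding all the other (nonnegative) summands.

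Concretely, I would evaluate the $i=0$ term $\deg(\gcd(x^{p^{rt}+1}+1,x^{j_t(p^0-1)}-1))$ directly via the Burnside interpretation used in the proof of Theorem \ref{th:numexact}: at $\theta^0=\mathrm{id}$ every element of $X$ is fixed, so $|X^{\mathrm{id}}|=|X|$. The set $X$ consists of equivalence classes $A_{h,t}=Kh$, where $K\le \fqn^\ast$ is the cyclic subgroup of order $j_t$ identified in the proof of Theorem \ref{th:numexact}. The constraint $\N_{q^n/q^t}(h)=\sigma^t(h)h=-1$ is equivalent to $h^{q^t+1}=-1$, which has exactly $q^t+1$ solutions in $\fqn^\ast$ (since $q^t+1$ divides $q^{2t}-1$). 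Partitioning these solutions into $K$-orbits of size $j_t$ gives $|X|=(q^t+1)/j_t$. This route avoids any ambiguity about the convention $\gcd(f,0)=f$ in the polynomial ring.

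Plugging this single-term lower bound into the exact formula of Theorem \ref{th:numexact} and using the fact that all other summands are nonnegative, I obtain
\[
N \;\ge\; \frac{\varphi(t)}{4rt\,j_t}\cdot (q^t+1).
\]
Since $N$ is a nonnegative integer, I can take the floor of the right-hand side. Substituting $j_t=2$ when $t\not\equiv 2\pmod 4$ and $j_t=p^{2r}+1=q^2+1$ when $t\equiv 2\pmod 4$ then yields the two claimed bounds.

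No genuine obstacle is expected: the only subtlety is the justification of the $i=0$ contribution in the polynomial-gcd formulation, which is exactly why I would prefer to argue through the Burnside identity $|X^{\mathrm{id}}|=|X|$ together with the explicit count $|X|=(q^t+1)/j_t$, rather than invoking the formal convention $\gcd(f,0)=f$.
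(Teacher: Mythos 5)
Your argument is correct and yields exactly the claimed bounds, but it reaches them by a different route than the paper. You start from the exact Burnside formula of Theorem \ref{th:numexact}, discard all summands except the one for $i=0$, and evaluate that term as $|X^{\mathrm{id}}|=|X|=(q^t+1)/j_t$ (your count of $|X|$ is right: the fiber of $-1$ under $\N_{q^{2t}/q^t}$ has exactly $q^t+1$ elements, $K$-multiplication preserves it because $j_t\mid q^t+1$, and the solutions split into $K$-cosets of size $j_t$). The paper instead argues from scratch: it counts the $(q^t+1)\varphi(2t)$ admissible pairs $(h,\sigma)$ and divides by an explicit upper bound $N$ on the size of an equivalence class, extracted from Theorem \ref{th:equivspecialcase} and Corollary \ref{cor:inequivalence_newMRD} (with $N$ depending on $t\bmod 4$). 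The two arguments are two faces of the same orbit-counting inequality $|X/G|\ge |X|/|G|$, so they necessarily give the same bound; what yours buys is brevity and a clean dependence on an already-proved exact formula (and it neatly sidesteps the $\gcd(f,0)=f$ convention for the $i=0$ term), while the paper's version is self-contained modulo the equivalence results and makes visible where the factors $2$, $4$ and $q^2+1$ in the class sizes come from. One cosmetic remark: passing from ``$N\ge$ (rational expression)'' to the floor is fine since an integer bounded below by a real number is bounded below by its floor.
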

\begin{proof}
 The number of pairs $(h,\sigma)\in \F_{q^n}\times \Gal(\fqn/\fq)$ such that the codes $\cC_{h,t,\sigma}$ are proved to be MRD is $(q^t+1)\varphi(2t)$. Moreover, combining Theorem \ref{th:equivspecialcase} with Corollary \ref{cor:inequivalence_newMRD}, we know that each equivalence class has $N$ elements, where
 $$N\leq \begin{cases}
16rt & \mbox{ if } t\equiv 0  \pmod{4}, \\
8rt & \mbox{ if }  t \equiv 1,3 \pmod{4}, \\
8rt(q^2+1) & \mbox{ if } t\equiv 2 \pmod{4}. 
 \end{cases} $$
 This follows from the fact that in Theorem \ref{th:equivspecialcase} the exponent $s$ can be equal to $t-1$ and $t+1$ if and only if $t$ is even -- otherwise $\sigma^s$ is not a generator of $\Gal(\fqn/\fq)$. Thus, combining this with the fact that 
 $$\varphi(2t)=\begin{cases}
2\varphi(t) & \mbox{ if } t \mbox{ is even }\\
\varphi(t) & \mbox{ if } t \mbox{ is odd} 
\end{cases}$$ 
we obtain a lower bound on the number of equivalence classes of the codes $\cC_{h,t,\sigma}$:
$$ \dfrac{\varphi(2t)(q^t+1)}{N}\geq \begin{dcases}
\dfrac{\varphi(t)(q^t+1)}{4rt(q^2+1)} & \mbox{ if } t\equiv 2 \pmod{4}, \\
\dfrac{\varphi(t)(q^t+1)}{8rt} & \mbox{ if }  t \not\equiv 2 \pmod{4}. 
\end{dcases} $$
\end{proof}

\subsection{Right idealizer and adjoint}
In this section we complete the study on the new family of codes, by analyzing their right idealizer and their adjoint.

\begin{proposition}\label{prop:right_idealizer}
Let $t\geq 5$. The right idealizer of $\mathcal{C}_{h,t,\sigma}$ is
\[ R(\cC_{h,t,\sigma})=\begin{cases} \left\{ax \colon a \in \F_{q^2}\right\}
& \mbox{ if $t$ is even, } \\ 
\big\{ax+b\psi_{h,t,\sigma}(x) \colon a \in \F_{q}, b=\frac{\delta}{\sigma(h)-\sigma^{-1}(h^{-1})},\delta^q+\delta=0\big\} & \mbox{ if $t$ is odd. } \end{cases} \]
In particular, $R(\mathcal{C}_{h,t,\sigma})\simeq \F_{q^2}$.
\end{proposition}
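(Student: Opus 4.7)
The plan is to compute $R(\C_{h,t,\sigma})$ directly from the definition. Since $\C_{h,t,\sigma} = \langle x, \psi\rangle_{\F_{q^n}}$ with $\psi := \psi_{h,t,\sigma}$, a polynomial $\varphi \in \cL_{n,q}$ belongs to $R(\C_{h,t,\sigma})$ if and only if $\varphi = x \circ \varphi \in \C_{h,t,\sigma}$ and $\psi \circ \varphi \in \C_{h,t,\sigma}$. The former forces $\varphi(x) = ax + b\psi(x)$ for some $a, b \in \F_{q^n}$, and the latter requires the existence of $c, d \in \F_{q^n}$ with
\[ \psi(ax) + \psi(b\psi(x)) = cx + d\psi(x). \]
Expanding both sides and using $\sigma^t(h) = -h^{-1}$, the left-hand side has $\sigma$-support contained in $\{0,1,2,t-2,t-1,t,t+1,t+2,2t-2,2t-1\}$, while the right-hand side has $\sigma$-support $\{0,1,t-1,t+1,2t-1\}$; matching coefficients at each index gives a linear system in $a, b, c, d$.

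The four equations at $i \in \{1, t-1, t+1, 2t-1\}$ give $d = \sigma(a) = \sigma^{t-1}(a) = \sigma^{t+1}(a) = \sigma^{-1}(a)$. From $\sigma^2(a) = a$ and $\sigma^{t-2}(a) = a$ one obtains $a \in \F_{q^{\gcd(2, t-2)}}$, i.e.\ $\F_{q^2}$ when $t$ is even and $\F_q$ when $t$ is odd. The equation at $i = 0$ determines $c$ as an explicit function of $a, b, h$. The five vanishing equations at $i \in \{2, t-2, t, t+2, 2t-2\}$ reduce, using Lemma \ref{lem:auxiliary} to guarantee nonvanishing denominators, to two independent conditions on $b$:
\[ \sigma^{t+1}(b) = -h^{-1}\sigma^2(h)\,\sigma(b), \qquad \sigma^2(b) = \sigma(\lambda)\, b, \qquad \lambda := -\frac{h\bigl(h - \sigma^{-2}(h^{-1})\bigr)}{1 - h\sigma^2(h)}. \]
Indeed, the pairs $(i, i+t)$ for $i \in \{2, t-2\}$ each give the same relation, reducing to the first equation and its $\sigma^{-2}$-shift, while the equation at $i = t$ yields the second after substituting the first.

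The final step is to analyze this pair. Iterating $\sigma^2(b) = \sigma(\lambda)b$ expresses $\sigma^{t+1}(b)$ as a scalar multiple of $\sigma(b)$ if $t$ is even or of $b$ if $t$ is odd. Setting $u := h\sigma^2(h)$, so that $\sigma^t(u) = u^{-1}$, a telescoping calculation yields $\prod_{i=1}^{t/2}\sigma^{2i}(\lambda) = \sigma^2(h)/h$ in the even case; matching with the first condition forces $1 = -1$, impossible in odd characteristic, whence $b = 0$. For $t$ odd, the analogous telescoping over the odd index subset isolates $\sigma(b)/b$ as an explicit element of $\F_{q^n}$ that, after simplification, encodes exactly the requirement that $\delta := b\,(\sigma(h) - \sigma^{-1}(h^{-1}))$ lies in $\F_{q^2}$ and satisfies $\delta^q + \delta = 0$, a $1$-dimensional $\F_q$-subspace. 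Together with the constraints on $a$ this gives the stated form of $R(\C_{h,t,\sigma})$ and $|R(\C_{h,t,\sigma})| = q^2$ in both cases. The isomorphism $R(\C_{h,t,\sigma}) \cong \F_{q^2}$ is then immediate in the even case and, in the odd case, is verified by checking directly that the composition of two elements of the stated form is again of that form and realizes the multiplicative structure of $\F_{q^2}$. I expect the main obstacle to be the telescoping in the odd-$t$ case, where the exact identification of the $\F_q$-subspace of admissible $\delta$ as $\{\delta : \delta^q + \delta = 0\}$ requires a careful manipulation rather than a purely formal argument.
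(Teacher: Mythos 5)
Your proposal is correct and follows essentially the same route as the paper: both reduce membership in $R(\cC_{h,t,\sigma})$ to the single functional equation $\psi_{h,t,\sigma}(ax)+\psi_{h,t,\sigma}(b\psi_{h,t,\sigma}(x))=cx+d\psi_{h,t,\sigma}(x)$ and extract the constraints on $a$ and $b$ by matching coefficients. The only real difference is packaging: the paper outsources the elimination to the second part of the proof of \cite[Theorem 4.2]{longobardi2021large} and obtains $R(\cC_{h,t,\sigma})\simeq\F_{q^2}$ from the general fact that the right idealizer of such a code is a finite field (\cite[Corollary 5.6]{lunardon2018nuclei}), whereas you carry the computation out by hand. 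The parts you wrote out check: the equations at indices $\{1,t-1,t+1,2t-1\}$ give $a\in\F_{q^{\gcd(2,t-2)}}$ independently of $b$; the five vanishing conditions do collapse to your two displayed relations on $b$, with Lemma \ref{lem:auxiliary} guaranteeing the denominators; and the even-$t$ telescoping product is indeed $\sigma^2(h)/h$, incompatible with the required value $-\sigma^2(h)h^{-1}$ in odd characteristic, so $b=0$. For the odd-$t$ step you flag as the main obstacle, you can sidestep the delicate telescoping entirely: your two relations already force $\sigma(b)=\mu b$ for an explicit $\mu$, so the admissible $b$ form at most a one-dimensional $\F_q$-space, and a direct substitution (using $\sigma(\delta)=\delta^q=-\delta$, $\sigma^t(h)=-h^{-1}$ and $\sigma^2(h)h\neq 1$) shows that every $b=\delta/(\sigma(h)-\sigma^{-1}(h^{-1}))$ with $\delta^q+\delta=0$ satisfies both relations; since these already form a one-dimensional $\F_q$-space, they exhaust the solutions. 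Finally, for the isomorphism with $\F_{q^2}$ in the odd case, verifying the multiplication table by hand is more work than needed; it is cleaner to argue, as the paper does, that the right idealizer is a finite field of cardinality $q^2$.
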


\begin{proof}
Let $f(x)$  be an element in $R(\mathcal{C}_{h,t,\sigma})$.
Hence, $f(x)\in R(\mathcal{C}_{h,t,\sigma})$, so that there exists $a,b \in \F_{q^{2t}}$ such that
\[ f(x)=ax+b\psi_{h,t,\sigma}(x). \]
Moreover, $\psi_{h,t,\sigma}(f(x))=\psi_{h,t,\sigma}(ax)+\psi_{h,t,\sigma}(b\psi_{h,t,\sigma}(x)) \in \C_{h,t,\sigma}$, that is there exist $c,d \in \F_{q^{2t}}$ such that
\[ \psi_{h,t,\sigma}(ax)+\psi_{h,t,\sigma}(b\psi_{h,t,\sigma}(x))=cx+d\psi_{h,t,\sigma}(x). \]
So that, one can argue as in the second part of the proof of \cite[Theorem 4.2]{longobardi2021large}, obtaining that $b=0$ if $t$ is even and 
\[ b=\frac{\delta}{\sigma(h)-\sigma^{-1}(h^{-1})}, \]
otherwise.
In the first case, we have again System \eqref{eq:systems=1} with $h=k$, which implies $a \in \F_{q^2}$.
In the latter case, System \eqref{eq:systems=1} with $h=k$ implies $a \in \F_q$.
It is easy to check that these maps are in $R(\C_{h,t,\sigma})$.
Since $R(\C_{h,t,\sigma})$ is a finite field (see \cite[Corollary 5.6]{lunardon2018nuclei}), and we have   $|R(\mathcal{C}_{h,t,\sigma})|=q^2$, then $R(\mathcal{C}_{h,t,\sigma})\simeq \F_{q^2}$.
\end{proof}

\begin{remark}
As a consequence of Proposition \ref{prop:right_idealizer},  we may deduce an alternative proof of Proposition \ref{prop:not_Gabidulin}. Indeed, the right idealizer of a Gabidulin code is isomorphic to $\fqn$ (see \cite{morrison2014equivalence,liebhold2016automorphism}) and it is well-known that  equivalent codes have isomorphic idealizers; see \cite[Proposition 4.1]{lunardon2018nuclei}.
 However, the same approach does not work for showing that the codes $\cC_{h,t,\sigma}$ are not equivalent to $2$-dimensional twisted Gabidulin codes, since in this case the right idealizers are isomorphic. 
\end{remark}

\begin{proposition}
 If $t\geq 5$, then the code $\C_{h,t,\sigma}^\top$ is equivalent to $\C_{h,t,\sigma}$.
\end{proposition}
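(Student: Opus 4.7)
The plan is to directly compute the adjoint $\psi_{h,t,\sigma}^\top$ as a $\sigma$-polynomial, recognize it as another element of the same $\psi$-family of Theorem~\ref{th:newfamily}, and then invoke Theorem~\ref{th:equivspecialcase} to transfer this to an equivalence of the associated codes.

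First I would read off the four non-zero coefficients of $\psi_{h,t,\sigma}$ and apply the adjoint formula $f^\top(x) = \sum_i \sigma^{n-i}(f_i) x^{\sigma^{n-i}}$ (with $n = 2t$, so $\sigma^{n-i} = \sigma^{-i}$), obtaining
\[ \psi_{h,t,\sigma}^\top(x) = h^{-1}\sigma(h)\,x^\sigma + \sigma^{t-1}(h)\sigma^t(h)\,x^{\sigma^{t-1}} + x^{\sigma^{t+1}} + x^{\sigma^{2t-1}}. \]
The only coefficient needing work is the one at $x^{\sigma^{t-1}}$: from $\sigma^t(h) = -h^{-1}$ and its consequence $\sigma^{t-1}(h) = -\sigma^{-1}(h^{-1})$, one gets $\sigma^{t-1}(h)\sigma^t(h) = h^{-1}\sigma^{-1}(h^{-1})$. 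Matching with the definition of $\psi_{h^{-1},t,\sigma^{-1}}$ (and using $\sigma^{-j} = \sigma^{2t-j}$ to rewrite its exponents) then yields the polynomial identity
\[ \psi_{h,t,\sigma}^\top(x) = \psi_{h^{-1},t,\sigma^{-1}}(x). \]
A quick check confirms that the parameter $k := h^{-1}$ satisfies the required norm condition for $\sigma^{-1}$: since $\sigma^{-t} = \sigma^t$ on $\fqn$, we have $\sigma^{-t}(k)k = (\sigma^t(h)h)^{-1} = -1$, so $\cC_{h^{-1},t,\sigma^{-1}}$ is a legitimate member of the family.

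With the identity above in hand, the adjoint code $\cC_{h,t,\sigma}^\top$ is equivalent to $\cC_{h^{-1},t,\sigma^{-1}}$, and it only remains to show that $\cC_{h^{-1},t,\sigma^{-1}}$ is equivalent to $\cC_{h,t,\sigma}$. For this I would invoke Theorem~\ref{th:equivspecialcase} Case II (with $s = -1$) applied to the pair of parameters $h$ and $k = h^{-1}$, taking $\rho = \mathrm{id}_{\fqn}$: the criterion $\rho(h) = \pm k^{-1}$ reduces to the trivially true $h = h$ (positive sign) when $t \not\equiv 2 \pmod{4}$, and in the case $t \equiv 2 \pmod{4}$ the criterion $\rho(h) = \ell k^{-1}$ with $\ell^{q^2+1}=1$ is met by $\ell = 1$. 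Composing the two equivalences delivers $\cC_{h,t,\sigma}^\top \sim \cC_{h,t,\sigma}$.

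The only step with real content is the identification of $\psi_{h,t,\sigma}^\top$ with $\psi_{h^{-1},t,\sigma^{-1}}$, which hinges critically on the defining relation $\sigma^t(h)h = -1$ of the parameter $h$; everything else is a direct appeal to the previously established equivalence criterion and presents no serious obstacle.
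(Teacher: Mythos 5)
Your proof is correct, and it takes a genuinely different route from the paper's. The paper passes to $g(x)=h\psi_{h,t,\sigma}^\top(h^{-1}x)=x^\sigma-x^{\sigma^{t-1}}-h\sigma(h)x^{\sigma^{t+1}}+h\sigma^{-1}(h^{-1})x^{\sigma^{2t-1}}$ and then exhibits explicit $b,c\in\fqn^*$ with $cx=\psi_{h,t,\sigma}(bg(x))$, redoing (a variant of) the computation of Theorem 4.6 of \cite{longobardi2021large}. You instead observe the structural identity $\psi_{h,t,\sigma}^\top=\psi_{h^{-1},t,\sigma^{-1}}$, which I have checked: the coefficient at $x^{\sigma^{t-1}}$ is $\sigma^{t-1}(h)\sigma^{t}(h)=(-\sigma^{-1}(h^{-1}))(-h^{-1})=h^{-1}\sigma^{-1}(h^{-1})$ by the defining relation $\sigma^t(h)=-h^{-1}$, and the remaining coefficients match those of $\psi_{h^{-1},t,\sigma^{-1}}$ after rewriting $\sigma^{-j}$ as $\sigma^{2t-j}$; the norm condition $\N_{q^n/q^t}(h^{-1})=-1$ also holds, so the adjoint lands back in the family with parameters $(h^{-1},\sigma^{-1})$. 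Invoking the sufficiency direction of Theorem \ref{th:equivspecialcase}, Case II, with $k=h^{-1}$ and $\rho=\mathrm{id}$ then finishes the argument --- in fact you only need the unconditional first step of that proof, namely that $\C_{k,t,\sigma^{-1}}$ is equivalent to $\C_{k^{-1},t,\sigma}$ for any admissible $k$. Your argument is shorter and more conceptual, explaining the self-adjointness as an instance of the $s=-1$ equivalence already established within the family; the trade-off is that it inherits the hypothesis $t\ge 5$ from Theorem \ref{th:equivspecialcase}, while the paper's explicit computation is logically independent of that theorem (though the proposition is stated for $t\ge 5$ anyway, so nothing is lost). Both arguments rest on the same identification of $\C_{h,t,\sigma}^\top$ with $\langle x,\psi_{h,t,\sigma}^\top(x)\rangle_{\fqn}$ that the paper itself uses, so you are not introducing any additional gap there.
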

\begin{proof}
Note that 
\[ \C_{h,t,\sigma}^\top=\langle x, {\psi}_{h,t,\sigma}^\top(x) \rangle_{\fqn}, \]
which turns out to be equivalent to $\C'=\langle x, g(x) \rangle_{\fqn}$, with \[g(x)=h{\psi}_{h,t,\sigma}^\top(h^{-1}x)=x^\sigma-x^{\sigma^{t-1}}-h\sigma(h)x^{\sigma^{t+1}}+h\sigma^{-1}(h^{-1})x^{\sigma^{2t-1}}.\]
As already done in Theorem \ref{th:equivspecialcase}, it is enough to find $b,c \in \fqn^*$ such that
\begin{equation}\label{eq:condbcadjoint} cx=\psi_{h,t,\sigma}(bg(x)). \end{equation}
Choosing
\[ b=\frac{\sigma^{-1}(h)}{\sigma(h)\sigma^{-1}(h)-1}, \]
and
\[ c=\sigma(b)h^{-1}\sigma(h)-\sigma^{t-1}(b)h^{-1}\sigma^{-1}(h^{-1})-\sigma^{t+1}(b)h\sigma(h)+\sigma^{-1}(b)h\sigma^{-1}(h^{-1}), \]
and performing similar calculations as in the proof of \cite[Theorem 4.6]{longobardi2021large}, \eqref{eq:condbcadjoint} turns out to be verified and so $\C_{h,t,\sigma}^\top$ and $\C_{h,t,\sigma}$ are equivalent.
\end{proof}

\section{Conclusions and open problems}\label{sec:conclusions}
In this paper we have provided a large family of rank-metric codes, which contains properly the codes found in  \cite{longobardizanellascatt} and in \cite{longobardi2021large}. These codes are $\fqn$-linear of dimension $2$ in the space $\cL_{n,q}$, where $n=2t$ and $t$ is any integer greater than $2$, and we proved that they are MRD. 
We have also studied exhaustively the equivalence of such codes for $t\ge 5$, characterizing their equivalence classes.
These codes turn out to be inequivalent to any other construction known so far, and hence they are really new.

However, there are still some open problems which are  related to the results obtained in this paper.

\begin{itemize}
    \item When $t=3$ and $\sigma\colon x \in \F_{q^6} \longmapsto x^q \in \F_{q^6}$, the problem of  code equivalence of $\C_{h,3,\sigma}$  with the other known $\F_{q^6}$-linear MRD codes in $\cL_{6,q}$ was investigated in \cite{BZZ}.
    It is easy to see that in this case $\C_{h,3,\sigma^{-1}}=\C_{h^{-1},3,\sigma}$, so that  the results in \cite{BZZ} complete the study of the equivalence between $\C_{h,3,\sigma}$ and the other known $\F_{q^6}$-linear MRD codes in $\cL_{6,q}$.
    However, it would be interesting to get the number of inequivalent MRD codes of the form $\C_{h,3,\sigma}$.
    \item The equivalence study for the case $t=4$ is open. As already mentioned before, more complicated calculations should be performed to analyze the equivalence of the code $\C_{h,4,\sigma}$ with the other known $\F_{q^8}$-linear MRD codes in $\cL_{8,q}$ and with the other codes of shape $\C_{k,4,\sigma'}$.
    \item In \cite[Section 3]{BZZ} (see also \cite{ZZ}), it was also proved that the $\fq$-linear set defined by $\psi_{h,3,\sigma}(x)$
    \[ L_{\psi_{h,3,\sigma}}=\{ \langle (x,\psi_{h,3,\sigma}(x)) \rangle_{\F_{q^6}} \colon x \in \F_{q^6}^* \} \]
    is not $\mathrm{P}\Gamma\mathrm{L}(2,q^6)$-equivalent to any known scattered $\F_q$-linear set in $\mathrm{PG}(1,q^6)$, except for the case $h \in \F_q$ and $q$ is a power of $5$.
    Whereas, in \cite[Section 5]{longobardi2021large} the authors proved that the polynomials $\psi_{h,t,\sigma}(x)$, with  $\sigma\colon x \in \F_{q^n} \mapsto x^q \in \F_{q^n}$, define a large class of scattered $\fq$-linear sets, so that there must be new examples of scattered linear sets defined by polynomials of the form $\psi_{h,t,\sigma}(x)$. We think that could be of interest to generalize the above results of \cite{longobardi2021large} by replacing $\sigma$ by any generator of $\mathrm{Gal}(\fqn/\fq)$ and also to investigate such equivalence issue also to the case $t=4$, which is the first open case.
    \item We have already mentioned in Section \ref{sec:rankmetric_linearized} that the same skew algebra $\cL_{n,\sigma}$ over a field $\LL$ can  be used to represent the algebra $\K^{n\times n}$, where $\LL/\K$ is a cyclic Galois extension of degree $n$ with Galois group of order $n$ generated by $\sigma$. Here, it is possible to define the same code $\cC^\LL_{h,t,\sigma}$, by picking an element $h\in\LL$ such that $\sigma^t(h)h=-1$. By using the arguments provided in \cite[Section 2]{elmaazouz2021}, one can show that the code $\cC^\LL_{h,t,\sigma}$ is MRD\footnote{In the case of infinite fields, the definition of an MRD code is slightly different: a rank-metric code $\C\in \K^{n \times n}$ that is $\mathbb E$-linear, for some subfield $\mathbb E$ of $\K$ of finite index, is  MRD if $\dim_{\mathbb E}(\C)=n[\K:\mathbb E](n-d(\cC)+1)$. Note that this definition coincides with the definition of $\F_{p^i}$-linear MRD codes in $\fq^{n \times n}$, where $\F_{p^i}$ is a subfield of $\fq$.} whenever $\LL/\K$ is an unramified extension of non-Archimedean local fields. Indeed, if $\fqn/\fq$ is the corresponding extension of the residue fields,  whenever we take an element $a\psi_{h,t,\sigma}(x)+bx\in\cC_{h,t,\sigma}^\LL$ for some $a,b \in \mathcal O_{\LL}$ where one of them has valuation $0$, the reduction modulo the maximal ideal of $\mathcal O_{\LL}$ gives a nonzero element of the MRD code relative to the extension $\fqn/\fq$. Since the rank of the  map reduced modulo the maximal ideal cannot increase, then also the minimum distance of $\cC_{h,t,\sigma}$ is at most the minimum distance of $\cC^\LL_{h,t,\sigma}$, showing that $\cC^\LL_{h,t,\sigma}$ must be MRD. It seems reasonable to think that the assumption on $\LL/\K$ to be an unramified extension of non-Archimedean local fields can be removed, and it would be interesting to prove that $\cC^\LL_{h,t,\sigma}$ is always MRD. This could  also provide an alternative proof of the fact that the codes  $\cC_{h,t,\sigma}$ is MRD not relying on the scatteredness of the $\sigma$-polynomial $\psi_{h,t,\sigma}(x)$.
\end{itemize}

\section*{Acknowledgements}
The first author was partially supported by \emph{Swiss National Science Foundation}, through
grant no. 187711. The research of the last two authors was supported by the Italian National Group for Algebraic and Geometric Structures and their Applications (GNSAGA - INdAM). The last author was also supported by the project ``VALERE: VAnviteLli pEr la RicErca" of the University of Campania ``Luigi Vanvitelli''.

\bibliographystyle{abbrv}
\bibliography{main}

\end{document}